\newcommand{\expr}[1]{#1{}}
\newcommand{\cont}[2]{#1{#2}}
\newcommand{\var}[2]{\mathit{#1}#2}
\newcommand{\abs}[3]{\lambda #1{.#2{#3}}}
\newcommand{\app}[3]{#1{~#2{#3}}}
\newcommand{\letexp}[4]{{\bf let} ~ #1{ =  #2{~ {\bf in} ~ #3{#4}}}}
\newcommand{\args}[3]{#1{\ldots#2{#3}}}
\newcommand{\fundef}[2]{#1 & = & \expr{#2}}
\newcommand{\cas}[6]{\begin{array}[t]{@{\hspace*{0mm}}l@{\hspace*{1mm}}l@{\hspace*{1mm}}c@{\hspace*{1mm}}l@{\hspace*{0mm}}} 
\multicolumn{4}{@{\hspace*{0mm}}l@{\hspace*{0mm}}}{{\bf case} ~ #1 ~ {\bf of}} \\
& #2{} & \rightarrow & #3{} \\
~~~ | & #4{} & \rightarrow & #5{#6}\end{array}}
\newcommand{\longcas}[8]{\begin{array}[t]{@{\hspace*{0mm}}l@{\hspace*{1mm}}l@{\hspace*{1mm}}c@{\hspace*{1mm}}l@{\hspace*{0mm}}} 
\multicolumn{4}{@{\hspace*{0mm}}l@{\hspace*{0mm}}}{{\bf case} ~ #1 ~ {\bf of}} \\
& #2{} & \rightarrow & #3{} \\
~~~ | & #4{} & \rightarrow & #5{} \\
~~~ | & #6{} & \rightarrow & #7{#8}\end{array}}
\newcommand{\smallcas}[4]{\begin{array}[t]{@{\hspace*{0mm}}l@{\hspace*{1mm}}l@{\hspace*{1mm}}c@{\hspace*{1mm}}l@{\hspace*{0mm}}}
\multicolumn{4}{@{\hspace*{0mm}}l@{\hspace*{0mm}}}{{\bf case} ~ #1 ~ {\bf of}} \\ ~~~~~ #2 ~ \rightarrow ~  #3{#4}\end{array}}
\newcommand{\casedots}[6]{{\bf case}~#1{~{\bf of}~#2{\rightarrow #3{~| 
\cdots|~#4{\rightarrow #5{#6}}}}}}
\newcommand{\where}[3]{\!\!\!\begin{array}[t]{@{\hspace*{0mm}}l@{\hspace*{1mm}}c@{\hspace*{1mm}}l@{\hspace*{0mm}}}\multicolumn{3}{@{\hspace*{0mm}}l@{\hspace*{0mm}}}{#1{}}\\\multicolumn{3}{@{\hspace*{0mm}}l@{\hspace*{0mm}}}{{\bf where}}\\#2{#3}\end{array}}
\newcommand{\Where}[6]{#1{} ~ {\bf where} ~ #2 = #3 \ldots #4 = #5{#6}}
\newcommand{\brackets}[2]{(#1{)#2}}
\newcommand{\prove}[5]{{\cal P}[\![#1{]\!]~#2{~#3{~#4{#5}}}}}
\newcommand{\generate}[6]{{\cal C}[\![#1{]\!]~#2{~#3{~#4{~#5{#6}}}}}}
\newcommand{\concat}[3]{#1{+\!\!+ #2{#3}}}
\newcommand{\wildcard}[0]{\underline{\hspace{2mm}}}
\newcommand{\ignore}[1]{}
\newenvironment{proof}{{\em Proof}.~}{\hfill $\Box$\\}
\newtheorem{theorem}{Theorem}[section]
\newtheorem{example}{Example}  
\newtheorem{property}{Property}  
\title{Generating Counterexamples for Model Checking by Transformation}
\author{G.W. Hamilton
\institute{School of Computing and Lero\\
Dublin City University\\
Ireland}
\email{hamilton@computing.dcu.ie}}
\begin{document}

\maketitle

\begin{abstract}
Counterexamples explain why a desired temporal logic property fails to hold. The generation of counterexamples is considered 
to be one of the primary advantages of model checking as a verification technique. Furthermore, when model checking does 
succeed in verifying a property, there is typically no independently checkable witness that can be used as evidence for the 
verified property. Previously, we have shown how program transformation techniques can be used for the verification of both 
safety and liveness properties of reactive systems. However, no counterexamples or witnesses were generated using the 
described techniques. In this paper, we address this issue. In particular, we show how the program transformation technique 
{\em distillation} can be used to facilitate the construction of counterexamples and witnesses for temporal properties of reactive
systems. Example systems which are intended to model mutual exclusion are analysed using these techniques with respect to 
both safety (mutual exclusion) and liveness (non-starvation), with counterexamples being generated for those properties which
do not hold.
\end{abstract}

\section{Introduction}

Model checking is a well established technique originally developed for the verification of temporal properties of 
finite state systems \cite{CLARKE86}. In addition to telling the user whether the desired temporal property holds, 
it can also generate a {\em counterexample}, explaining the reason why this property failed. This is considered to be 
one of the major advantages of model checking when compared to other verification methods. Fold/unfold program 
transformation techniques have more recently been proposed as an approach to model checking. 
Many such techniques have been developed for logic programs (e.g. \cite{LEUSCHEL99,ROYCHOUDHURI00,FIORAVANTI01,PETTOROSSI09,SEKI11}). However, very few such techniques 
have been developed for functional programs (with the work of Lisitsa and Nemytykh \cite{LISITSA07,LISITSA08} using
supercompilation \cite{TURCHIN86} being a notable exception), and these deal only with safety properties. Unfortunately, 
none of these techniques generate counterexamples when the temporal property does not hold.

In previous work \cite{HAMILTON15}, we have shown how a fold/unfold program transformation technique can be used to 
facilitate the verification of both safety and liveness properties of reactive systems which have been specified using functional 
programs. These functional programs produce a {\em trace} of states as their output, and the temporal property specifies the 
constraints that all output traces from the program should satisfy. However, counterexamples and witnesses were not generated 
using this approach. In this paper, we address this shortcoming to show how our previous work can be extended to generate 
a counterexample trace when a temporal property does not hold, and a witness when it does. 

The program transformation technique which we use is our own {\em distillation} \cite{HAMILTON07A,HAMILTON12} 
which builds on top of positive supercompilation \cite{SORENSEN96}, but is much more powerful. Distillation is used to 
transform the programs defining reactive systems into a simplified form which makes them much easier to analyse. 
We then show how temporal properties for this simplified form can be verified, and extend this to generate counterexamples
and witnesses. The described techniques are applied to a number of example systems which are intended to model mutually 
exclusive access to a critical resource by two processes. When a specified temporal property does not hold, we show how our 
approach can be applied to generate a corresponding counterexample and when the property does hold we show our approach
can be applied to generate a corresponding witness.

The remainder of this paper is structured as follows. In Section 2, we introduce the functional language over which our
verification techniques are defined. In Section 3, we show how to specify reactive systems in our language, 
and give a number of example systems which are intended to model mutually exclusive access to a critical resource by two 
processes. In Section 4, we describe how to specify temporal properties for reactive systems defined in our language, and
specify both safety (mutual exclusion) and  liveness (non-starvation) for the example systems. In Section 5, we describe our 
technique for verifying temporal properties of reactive systems and apply this technique to the example systems to verify the 
previously specified temporal properties. In Section 6, we describe our technique for the generation of counterexamples and
witnesses, and apply this technique to the example systems. Section 7 concludes and considers related work.

\section{Language}
 \label{sec-language-definition}

In this section, we describe the syntax and semantics of the higher-order functional language which will be used 
throughout this paper. 

\subsection{Syntax}

The syntax of our language is given in Figure \ref{grammar}.
\begin{figure}[htb]
\begin{center}
\begin{tabular}{@{\hspace*{0mm}}l@{\hspace*{1mm}}r@{\hspace*{1mm}}l@{\hspace*{1mm}}l@{\hspace*{0mm}}}
$\expr{\var{e}}$ & ::= & $\expr{\var{x}}$ & Variable \\
& $|$ & $\expr{\app{\var{c}}{\args{\var{e_1}}{\var{e_k}}}}$ & Constructor Application \\
& $|$ & $\expr{\abs{\var{x}}{\var{e}}}$ & $\lambda$-Abstraction \\
& $|$ & $\expr{\var{f}}$ & Function Call \\
& $|$ & $\expr{\app{\var{e_0}}{\var{e_1}}}$ & Application \\
& $|$ & $\expr{\casedots{\var{e_0}}{\var{p_1}}{\var{e_1}}{\var{p_k}}{\var{e_k}}}$ & Case Expression \\ 
& $|$ & $\expr{\letexp{\var{x}}{\var{e_0}}{\var{e_1}}}$ & Let Expression \\
& $|$ & $\expr{\Where{\var{e_0}}{\var{f_1}}{\var{e_1}}{\var{f_n}}{\var{e_n}}}$ & Local Function Definitions \\
\\
$\expr{\var{p}}$ & ::= & $\expr{\app{\var{c}}{\args{\var{x_1}}{\var{x_k}}}}$ & Pattern
\end{tabular}
\end{center}
\caption{Language Grammar}
\label{grammar}
\end{figure} \\
A program is an expression which can be a variable, constructor application, $\lambda$-abstraction, 
function call, application, {\bf case}, {\bf let} or {\bf where}. 
Variables introduced by $\lambda$-abstractions, {\bf let} expressions and {\bf case} patterns are {\em bound}; all other variables 
are {\em free}. An expression which contains no free variables is said to be {\em closed}.

Each constructor has a fixed arity; for example $\expr{\var{Nil}}$ has arity 0
and $\expr{\var{Cons}}$ has arity 2.  In an expression $\expr{\app{\var{c}}{\args{\var{e_{1}}}{\var{e_{n}}}}}$,
$n$ must equal the arity of $c$. 
The patterns in {\bf case} expressions may not be nested.  No variable may appear more than once within a pattern
and the same constructor cannot appear within more than one pattern.
We assume that the patterns in a {\bf case} expression are exhaustive; we also allow a wildcard pattern
$\wildcard$ which always matches if none of the earlier patterns match. Types are defined using algebraic data types, and it is 
assumed that programs are well-typed. Erroneous terms such as $\expr{\casedots{\brackets{\abs{x}{e}}}{\var{p_1}}{\var{e_1}}{\var{p_k}}{\var{e_k}}}$ and $\expr{\app{\brackets{\app{\var{c}}{\args{\var{e_1}}{\var{e_n}}}}}{\var{e}}}$ where $c$ is of arity $n$ cannot therefore occur.

\subsection{Semantics}

The call-by-name operational semantics of our language is standard: we define an evaluation relation $\Downarrow$ between 
closed expressions and {\em values}, where values are expressions in {\em weak head normal form} (i.e. constructor applications or $\lambda$-abstractions). 
We define a one-step reduction relation $\overset{r}{\leadsto}$ inductively as shown in Figure \ref{reduction}, where the reduction $r$ can be $f$ (unfolding of function $f$), $c$ (elimination of constructor $c$) or $\beta$ ($\beta$-substitution). 
\begin{figure}[htb]
\begin{center}
\begin{tabular}{c@{\hspace*{1cm}}c}
$((\lambda x.e_0)~e_1) \overset{\beta}{\leadsto} (e_0\{x \mapsto e_1\})$ & $(\expr{\letexp{\var{x}}{\var{e_0}}{\var{e_1}}}) \overset{\beta}{\leadsto} (e_1\{x \mapsto e_0\})$ \\
\\
$\infer{f \overset{f}{\leadsto} e}{f=e}$ & $\infer{(e_0~e_1) \overset{r}{\leadsto} (e_0'~e_1)}{e_0 \overset{r}{\leadsto} e_0'}$  \\
\\
\multicolumn{2}{c}{$\infer{(\mathbf{case}~(c~e_1 \ldots e_n)~\mathbf{of}~p_1:e_1' | \ldots | p_k:e_k') \overset{c}{\leadsto} (e_i\{x_1 \mapsto e_1,\ldots,x_n \mapsto e_n\})}{p_i=c~x_1 \ldots x_n}$} \\
\\
\multicolumn{2}{c}{$\infer{(\mathbf{case}~e_0~\mathbf{of}~p_1:e_1 | \ldots p_k:e_k) \overset{r}{\leadsto} (\mathbf{case}~e_0'~\mathbf{of}~p_1:e_1 | \ldots p_k:e_k)}{e_0 \overset{r}{\leadsto} e_0'}$}
\end{tabular} 
\end{center}
\caption{One-Step Reduction Relation}
\label{reduction}
\end{figure} \\
We use the notation $e\leadsto$ if the expression $e$ reduces, $e\!\Uparrow$ if $e$ diverges, $e\!\Downarrow$ if $e$ converges and 
$e\!\Downarrow\!v$ if $e$ evaluates to the value $v$. These are defined as follows, where $\overset{r*}{\leadsto}$ denotes the reflexive transitive closure of $\overset{r}{\leadsto}$:
\begin{center}
\begin{tabular}{l@{\hspace{0.5cm}}l}
$e\leadsto$, iff $\exists e'.e \overset{r}{\leadsto} e'$ & $e\!\Downarrow$, iff $\exists v.e\!\Downarrow\!v$ \\
$e\!\Downarrow\!v$, iff $e \overset{r*}{\leadsto}v \wedge \neg(v\leadsto)$ & $e\!\Uparrow$, iff $\forall e'.e \overset{r*}{\leadsto}e' \Rightarrow e'\leadsto$
\end{tabular}
\end{center}
\section{Specifying Reactive Systems}

In this section, we show how to specify reactive systems in our programming language.
While reactive systems are usually specified using {\em labelled transitions systems} (LTSs), our
specifications can be trivially derived from these.
Reactive systems have to react to a series of {\em external events} by updating their {\em states}.
In order to facilitate this, we make use of a {\em list} datatype, which is defined as follows for the element type $a$:
$$List~a ::= Nil~|~Cons~a~(List~a)$$
We use $[]$ as a shorthand for $Nil$, and $[s_1, \ldots, s_n]$ as a shorthand for a list containing the elements $s_1 \ldots s_n$.
We also use $+\!\!+$ to represent list concatenation.
Our programs will map a (potentially infinite) input list of external events and an initial state to a (potentially infinite) output list of 
{\em observable states} (a {\em trace}), which gives the values of a subset of state variables whose properties can be verified.

In this paper, we wish to analyse a number of systems which are intended to implement mutually exclusive
access to a critical resource for two processes. In all of these systems, the external events belong to the following datatype: 
$$Event ::= Request_1~|~Request_2~|~Take_1~|~Take_2~|~Release_1~|~Release_2$$
Each of the two processes can therefore request access to the critical resource, and take and release
this resource. Observable states in all of our example systems belong to the following datatype:
$$State ::= ObsState~ProcState~ProcState$$
$$ProcState ::= T~|~W~|~U$$
Each process can therefore be thinking ($T$), waiting for the critical resource ($W$) or using the critical resource ($U$). 

Each of our example systems is transformed into a simplified form as previously shown in \cite{HAMILTON15} using distillation 
\cite{HAMILTON07A,HAMILTON12}, a powerful program transformation technique which builds on top of the supercompilation 
transformation \cite{TURCHIN86,SORENSEN96}. Due to the nature of the programs modelling reactive systems, in which the input 
is an external event list, and the output is a list of observable states, the programs resulting from this transformation take the form 
$e^{\emptyset}$, where $e^{\rho}$ is defined as shown in Figure \ref{simplified} where the {\bf let} variables are added to the set 
$\rho$, and will not be used as {\bf case} selectors.
\begin{center}
\begin{figure}[htbp]
\begin{center}
\begin{tabular}{lrl}
$\expr{\var{e^{\rho}}}$ & ::= & $\expr{\app{\app{\var{Cons}}{\var{e_0^{\rho}}}}{\var{e_1^{\rho}}}}$ \\
& $|$ & $\expr{\app{\var{f}}{\args{\var{x_1}}{\var{x_n}}}}$ \\
& $|$ & $\expr{\casedots{\var{x}}{\var{p_1}}{\var{e_1^{\rho}}}{\var{p_k}}{\var{e_n^{\rho}}}}$, where $x \notin \rho$ \\ 
& $|$ & $\expr{\app{\var{x}}{\args{\var{e_1^{\rho}}}{\var{e_n^{\rho}}}}}$, where $x \in \rho$ \\
& $|$ & $\expr{\letexp{\var{x}}{\abs{\args{\var{x_1}}{\var{x_n}}}{\var{e_0^{\rho}}}}{\var{e_1^{(\rho \cup \{x\})}}}}$ \\
& $|$ & $\expr{\Where{\var{e_0^{\rho}}}{\var{f_1}}{\abs{\args{\var{x_{1_1}}}{\var{x_{1_k}}}}{\var{e_1^{\rho}}}}{\var{f_n}}{\abs{\args{\var{x_{n_1}}}{\var{x_{n_k}}}}{\var{e_n^{\rho}}}}}$
\end{tabular} 
\end{center}
\label{simplified}
\caption{Simplified Form Resulting From Distillation}
\end{figure} 
\end{center}
The crucial syntactic property of this simplified form is that all functions must be tail recursive; this is what allows the resulting
programs to be verified more easily.
In all of the following examples, the variable $es$ represents the external event list.
\begin{example}
\normalfont{In the first example shown in Figure \ref{example1}, each process can request access to the critical resource if 
it is thinking and the other process is not using it, take the critical resource if it is waiting for it, and release the critical resource 
if it is using it. The LTS representation of this program is shown in Figure \ref{example1lts} (for ease of presentation of this and
subsequent LTSs, transitions back into the same state have been omitted).}
\end{example}
\begin{figure}[htbp]
\hspace*{2cm}
\begin{tabular}{l}
$\expr{\where{\app{\app{\var{Cons}}{\var{(ObsState~T~T)}}}{\brackets{\app{\var{f_1}}{\var{es}}}}}{
\fundef{\var{f_1}}{\abs{\var{es}}{\smallcas{\var{es}}{\app{\app{\var{Cons}}{\var{e}}}{\var{es}}}{\longcas{\var{e}}{\var{Request_1}}{\app{\app{\var{Cons}}{\var{(ObsState~W~T)}}}{\brackets{\app{\var{f_2}}{\var{es}}}}}{\var{Request_2}}{\app{\app{\var{Cons}}{\var{(ObsState~T~W)}}}{\brackets{\app{\var{f_3}}{\var{es}}}}}{\var{\wildcard}}{\app{\app{\var{Cons}}{\var{(ObsState~T~T)}}}{\brackets{\app{\var{f_1}}{\var{es}}}}}}}} \\
\fundef{\var{f_2}}{\abs{\var{es}}{\smallcas{\var{es}}{\app{\app{\var{Cons}}{\var{e}}}{\var{es}}}{\longcas{\var{e}}{\var{Take_1}}{\app{\app{\var{Cons}}{\var{(ObsState~U~T)}}}{\brackets{\app{\var{f_4}}{\var{es}}}}}{\var{Request_2}}{\app{\app{\var{Cons}}{\var{(ObsState~W~W)}}}{\brackets{\app{\var{f_5}}{\var{es}}}}}{\var{\wildcard}}{\app{\app{\var{Cons}}{\var{(ObsState~W~T)}}}{\brackets{\app{\var{f_2}}{\var{es}}}}}}}} \\
\fundef{\var{f_3}}{\abs{\var{es}}{\smallcas{\var{es}}{\app{\app{\var{Cons}}{\var{e}}}{\var{es}}}{\longcas{\var{e}}{\var{Request_1}}{\app{\app{\var{Cons}}{\var{(ObsState~W~W)}}}{\brackets{\app{\var{f_5}}{\var{es}}}}}{\var{Take_2}}{\app{\app{\var{Cons}}{\var{(ObsState~T~U)}}}{\brackets{\app{\var{f_6}}{\var{es}}}}}{\var{\wildcard}}{\app{\app{\var{Cons}}{\var{(ObsState~T~W)}}}{\brackets{\app{\var{f_3}}{\var{es}}}}}}}} \\
\fundef{\var{f_4}}{\abs{\var{es}}{\smallcas{\var{es}}{\app{\app{\var{Cons}}{\var{e}}}{\var{es}}}{\cas{\var{e}}{\var{Release_1}}{\app{\app{\var{Cons}}{\var{(ObsState~T~T)}}}{\brackets{\app{\var{f_1}}{\var{es}}}}}{\var{\wildcard}}{\app{\app{\var{Cons}}{\var{(ObsState~U~T)}}}{\brackets{\app{\var{f_4}}{\var{es}}}}}}}} \\
\fundef{\var{f_5}}{\abs{\var{es}}{\smallcas{\var{es}}{\app{\app{\var{Cons}}{\var{e}}}{\var{es}}}{\longcas{\var{e}}{\var{Take_1}}{\app{\app{\var{Cons}}{\var{(ObsState~U~W)}}}{\brackets{\app{\var{f_7}}{\var{es}}}}}{\var{Take_2}}{\app{\app{\var{Cons}}{\var{(ObsState~W~U)}}}{\brackets{\app{\var{f_8}}{\var{es}}}}}{\var{\wildcard}}{\app{\app{\var{Cons}}{\var{(ObsState~W~W)}}}{\brackets{\app{\var{f_5}}{\var{es}}}}}}}} \\
\fundef{\var{f_6}}{\abs{\var{es}}{\smallcas{\var{es}}{\app{\app{\var{Cons}}{\var{e}}}{\var{es}}}{\cas{\var{e}}{\var{Release_2}}{\app{\app{\var{Cons}}{\var{(ObsState~T~T)}}}{\brackets{\app{\var{f_1}}{\var{es}}}}}{\var{\wildcard}}{\app{\app{\var{Cons}}{\var{(ObsState~T~U)}}}{\brackets{\app{\var{f_6}}{\var{es}}}}}}}} \\
\fundef{\var{f_7}}{\abs{\var{es}}{\smallcas{\var{es}}{\app{\app{\var{Cons}}{\var{e}}}{\var{es}}}{\longcas{\var{e}}{\var{Release_1}}{\app{\app{\var{Cons}}{\var{(ObsState~T~W)}}}{\brackets{\app{\var{f_3}}{\var{es}}}}}{\var{Take_2}}{\app{\app{\var{Cons}}{\var{(ObsState~U~U)}}}{\brackets{\app{\var{f_9}}{\var{es}}}}}{\var{\wildcard}}{\app{\app{\var{Cons}}{\var{(ObsState~U~W)}}}{\brackets{\app{\var{f_7}}{\var{es}}}}}}}} \\
\fundef{\var{f_8}}{\abs{\var{es}}{\smallcas{\var{es}}{\app{\app{\var{Cons}}{\var{e}}}{\var{es}}}{\longcas{\var{e}}{\var{Release_2}}{\app{\app{\var{Cons}}{\var{(ObsState~W~T)}}}{\brackets{\app{\var{f_2}}{\var{es}}}}}{\var{Take_1}}{\app{\app{\var{Cons}}{\var{(ObsState~U~U)}}}{\brackets{\app{\var{f_9}}{\var{es}}}}}{\var{\wildcard}}{\app{\app{\var{Cons}}{\var{(ObsState~W~U)}}}{\brackets{\app{\var{f_8}}{\var{es}}}}}}}} \\
\fundef{\var{f_9}}{\abs{\var{es}}{\smallcas{\var{es}}{\app{\app{\var{Cons}}{\var{e}}}{\var{es}}}{\longcas{\var{e}}{\var{Release_1}}{\app{\app{\var{Cons}}{\var{(ObsState~T~U)}}}{\brackets{\app{\var{f_6}}{\var{es}}}}}{\var{Release_2}}{\app{\app{\var{Cons}}{\var{(ObsState~U~T)}}}{\brackets{\app{\var{f_4}}{\var{es}}}}}{\var{\wildcard}}{\app{\app{\var{Cons}}{\var{(ObsState~U~U)}}}{\brackets{\app{\var{f_9}}{\var{es}}}}}}}}}}$
\end{tabular}
\caption{Example 1}
\label{example1}
\end{figure}
\begin{figure}[htbp]
\begin{pgfpicture}{0cm}{1cm}{20cm}{10cm}
\pgfnodebox{node1}[stroke]{\pgfxy(8.0,9.5)}{\parbox{1.1cm}{$f_1 \\ s_1 = T \\ s_2 = T$}}{5pt}{5pt}
\pgfnodebox{node2}[stroke]{\pgfxy(5.0,7.5)}{\parbox{1.1cm}{$f_2 \\ s_1 = W \\ s_2 = T$}}{5pt}{5pt}
\pgfsetarrowsend{stealth}
\pgfnodeconnline{node1}{node2}
\pgfnodelabel{node1}{node2}[0.5][0pt]{\pgfbox[center,center]{$Request_1$}}
\pgfnodebox{node3}[stroke]{\pgfxy(11.0,7.5)}{\parbox{1.1cm}{$f_3 \\ s_1 = T \\ s_2 = W$}}{5pt}{5pt}
\pgfnodeconnline{node1}{node3}
\pgfnodelabel{node1}{node3}[0.5][0pt]{\pgfbox[center,center]{$Request_2$}}
\pgfnodebox{node4}[stroke]{\pgfxy(2.0,5.5)}{\parbox{1.1cm}{$f_4 \\ s_1 = U \\ s_2 = T$}}{5pt}{5pt}
\pgfnodeconnline{node2}{node4}
\pgfnodelabel{node2}{node4}[0.5][0pt]{\pgfbox[center,center]{$Take_1$}}
\pgfnodeconncurve{node4}{node1}{90}{180}{2cm}{2cm}
\pgfnodelabel{node4}{node1}[0.9][1cm]{\pgfbox[center,center]{$Release_1$}}
\pgfnodebox{node5}[stroke]{\pgfxy(8.0,5.5)}{\parbox{1.1cm}{$f_5 \\ s_1 = W \\ s_2 = W$}}{5pt}{5pt}
\pgfnodeconnline{node2}{node5}
\pgfnodelabel{node2}{node5}[0.5][0pt]{\pgfbox[center,center]{$Request_2$}}
\pgfnodeconnline{node3}{node5}
\pgfnodelabel{node3}{node5}[0.5][0pt]{\pgfbox[center,center]{$Request_1$}}
\pgfnodebox{node6}[stroke]{\pgfxy(14.0,5.5)}{\parbox{1.1cm}{$f_6 \\ s_1 = T \\ s_2 = U$}}{5pt}{5pt}
\pgfnodeconnline{node3}{node6}
\pgfnodelabel{node3}{node6}[0.5][0pt]{\pgfbox[center,center]{$Take_2$}}
\pgfnodeconncurve{node6}{node1}{90}{0}{2cm}{2cm}
\pgfnodelabel{node6}{node1}[0.9][-1cm]{\pgfbox[center,center]{$Release_2$}}
\pgfnodebox{node7}[stroke]{\pgfxy(11.0,3.5)}{\parbox{1.1cm}{$f_7 \\ s_1 = U \\ s_2 = W$}}{5pt}{5pt}
\pgfnodeconnline{node5}{node7}
\pgfnodelabel{node5}{node7}[0.5][0pt]{\pgfbox[center,center]{$Take_1$}}
\pgfnodeconnline{node7}{node3}
\pgfnodelabel{node7}{node3}[0.5][0pt]{\pgfbox[center,center]{$Release_1$}}
\pgfnodebox{node8}[stroke]{\pgfxy(5.0,3.5)}{\parbox{1.1cm}{$f_8 \\ s_1 = W \\ s_2 = U$}}{5pt}{5pt}
\pgfnodeconnline{node5}{node8}
\pgfnodelabel{node5}{node8}[0.5][0pt]{\pgfbox[center,center]{$Take_2$}}
\pgfnodeconnline{node8}{node2}
\pgfnodelabel{node8}{node2}[0.5][0pt]{\pgfbox[center,center]{$Release_2$}}
\pgfnodebox{node9}[stroke]{\pgfxy(8.0,1.5)}{\parbox{1.1cm}{$f_9 \\ s_1 = U \\ s_2 = U$}}{5pt}{5pt}
\pgfnodeconnline{node7}{node9}
\pgfnodelabel{node7}{node9}[0.5][0pt]{\pgfbox[center,center]{$Take_2$}}
\pgfnodeconnline{node8}{node9}
\pgfnodelabel{node8}{node9}[0.5][0pt]{\pgfbox[center,center]{$Take_1$}}
\pgfnodeconncurve{node9}{node4}{180}{270}{2cm}{2cm}
\pgfnodelabel{node9}{node4}[0.1][0.9cm]{\pgfbox[center,center]{$Release_2$}}
\pgfnodeconncurve{node9}{node6}{0}{270}{2cm}{2cm}
\pgfnodelabel{node9}{node6}[0.1][-1cm]{\pgfbox[center,center]{$Release_1$}}
\end{pgfpicture}
\caption{LTS Representation of Example 1}
\label{example1lts}
\end{figure}

\begin{example}
\normalfont{In the second example shown in Figure \ref{example2}, each process can request access to the critical resource if 
it is thinking and the other process is not using it, take the critical resource if it is waiting for it and the other process is thinking, 
and release the critical resource if it is using it. The LTS representation of this program is shown in Figure \ref{example2lts}.}
\end{example}
\begin{figure}[htbp]
\hspace*{2cm}
\begin{tabular}{l}
$\expr{\where{\app{\app{\var{Cons}}{\var{(ObsState~T~T)}}}{\brackets{\app{\var{f_1}}{\var{es}}}}}{
\fundef{\var{f_1}}{\abs{\var{es}}{\smallcas{\var{es}}{\app{\app{\var{Cons}}{\var{e}}}{\var{es}}}{\longcas{\var{e}}{\var{Request_1}}{\app{\app{\var{Cons}}{\var{(ObsState~W~T)}}}{\brackets{\app{\var{f_2}}{\var{es}}}}}{\var{Request_2}}{\app{\app{\var{Cons}}{\var{(ObsState~T~W)}}}{\brackets{\app{\var{f_3}}{\var{es}}}}}{\var{\wildcard}}{\app{\app{\var{Cons}}{\var{(ObsState~T~T)}}}{\brackets{\app{\var{f_1}}{\var{es}}}}}}}} \\
\fundef{\var{f_2}}{\abs{\var{es}}{\smallcas{\var{es}}{\app{\app{\var{Cons}}{\var{e}}}{\var{es}}}{\longcas{\var{e}}{\var{Take_1}}{\app{\app{\var{Cons}}{\var{(ObsState~U~T)}}}{\brackets{\app{\var{f_4}}{\var{es}}}}}{\var{Request_2}}{\app{\app{\var{Cons}}{\var{(ObsState~W~W)}}}{\brackets{\app{\var{f_5}}{\var{es}}}}}{\var{\wildcard}}{\app{\app{\var{Cons}}{\var{(ObsState~W~T)}}}{\brackets{\app{\var{f_2}}{\var{es}}}}}}}} \\
\fundef{\var{f_3}}{\abs{\var{es}}{\smallcas{\var{es}}{\app{\app{\var{Cons}}{\var{e}}}{\var{es}}}{\longcas{\var{e}}{\var{Request_1}}{\app{\app{\var{Cons}}{\var{(ObsState~W~W)}}}{\brackets{\app{\var{f_5}}{\var{es}}}}}{\var{Take_2}}{\app{\app{\var{Cons}}{\var{(ObsState~T~U)}}}{\brackets{\app{\var{f_6}}{\var{es}}}}}{\var{\wildcard}}{\app{\app{\var{Cons}}{\var{(ObsState~T~W)}}}{\brackets{\app{\var{f_3}}{\var{es}}}}}}}} \\
\fundef{\var{f_4}}{\abs{\var{es}}{\smallcas{\var{es}}{\app{\app{\var{Cons}}{\var{e}}}{\var{es}}}{\cas{\var{e}}{\var{Release_1}}{\app{\app{\var{Cons}}{\var{(ObsState~T~T)}}}{\brackets{\app{\var{f_1}}{\var{es}}}}}{\var{\wildcard}}{\app{\app{\var{Cons}}{\var{(ObsState~U~T)}}}{\brackets{\app{\var{f_4}}{\var{es}}}}}}}} \\
\fundef{\var{f_5}}{\abs{\var{es}}{\smallcas{\var{es}}{\app{\app{\var{Cons}}{\var{e}}}{\var{es}}}{\smallcas{\var{e}}{\var{\wildcard}}{\app{\app{\var{Cons}}{\var{(ObsState~W~W)}}}{\brackets{\app{\var{f_5}}{\var{es}}}}}}}} \\
\fundef{\var{f_6}}{\abs{\var{es}}{\smallcas{\var{es}}{\app{\app{\var{Cons}}{\var{e}}}{\var{es}}}{\cas{\var{e}}{\var{Release_2}}{\app{\app{\var{Cons}}{\var{(ObsState~T~T)}}}{\brackets{\app{\var{f_1}}{\var{es}}}}}{\var{\wildcard}}{\app{\app{\var{Cons}}{\var{(ObsState~T~U)}}}{\brackets{\app{\var{f_6}}{\var{es}}}}}}}}}}$
\end{tabular} 
\caption{Example 2}
\label{example2}
\end{figure}
\begin{figure}[htbp]
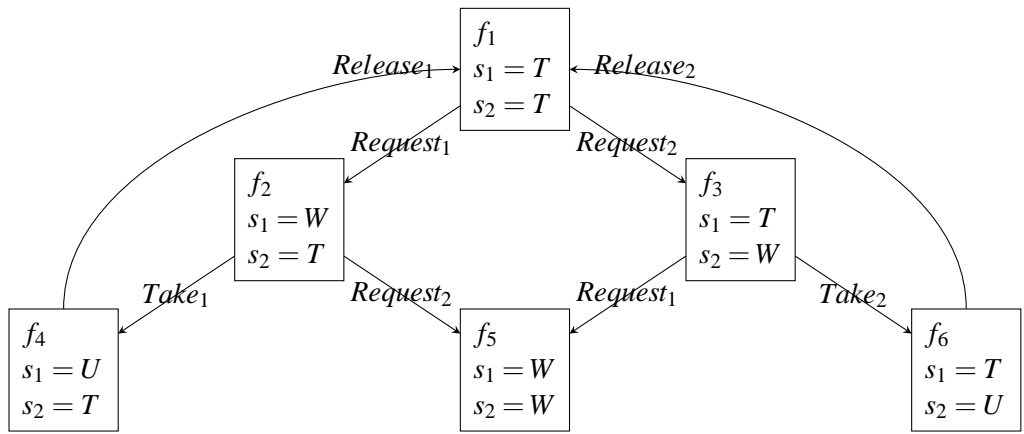

\begin{pgfpicture}{0cm}{0cm}{20cm}{5.5cm}
\pgfnodebox{node1}[stroke]{\pgfxy(8.0,4.5)}{\parbox{1.1cm}{$f_1 \\ s_1 = T \\ s_2 = T$}}{5pt}{5pt}
\pgfnodebox{node2}[stroke]{\pgfxy(5.0,2.5)}{\parbox{1.1cm}{$f_2 \\ s_1 = W \\ s_2 = T$}}{5pt}{5pt}
\pgfsetarrowsend{stealth}
\pgfnodeconnline{node1}{node2}
\pgfnodelabel{node1}{node2}[0.5][0pt]{\pgfbox[center,center]{$Request_1$}}
\pgfnodebox{node3}[stroke]{\pgfxy(11.0,2.5)}{\parbox{1.1cm}{$f_3 \\ s_1 = T \\ s_2 = W$}}{5pt}{5pt}
\pgfnodeconnline{node1}{node3}
\pgfnodelabel{node1}{node3}[0.5][0pt]{\pgfbox[center,center]{$Request_2$}}
\pgfnodebox{node4}[stroke]{\pgfxy(2.0,0.5)}{\parbox{1.1cm}{$f_4 \\ s_1 = U \\ s_2 = T$}}{5pt}{5pt}
\pgfnodeconnline{node2}{node4}
\pgfnodelabel{node2}{node4}[0.5][0pt]{\pgfbox[center,center]{$Take_1$}}
\pgfnodeconncurve{node4}{node1}{90}{180}{2cm}{2cm}
\pgfnodelabel{node4}{node1}[0.9][1cm]{\pgfbox[center,center]{$Release_1$}}
\pgfnodebox{node5}[stroke]{\pgfxy(8.0,0.5)}{\parbox{1.1cm}{$f_5 \\ s_1 = W \\ s_2 = W$}}{5pt}{5pt}
\pgfnodeconnline{node2}{node5}
\pgfnodelabel{node2}{node5}[0.5][0pt]{\pgfbox[center,center]{$Request_2$}}
\pgfnodeconnline{node3}{node5}
\pgfnodelabel{node3}{node5}[0.5][0pt]{\pgfbox[center,center]{$Request_1$}}
\pgfnodebox{node6}[stroke]{\pgfxy(14.0,0.5)}{\parbox{1.1cm}{$f_6 \\ s_1 = T \\ s_2 = U$}}{5pt}{5pt}
\pgfnodeconnline{node3}{node6}
\pgfnodelabel{node3}{node6}[0.5][0pt]{\pgfbox[center,center]{$Take_2$}}
\pgfnodeconncurve{node6}{node1}{90}{0}{2cm}{2cm}
\pgfnodelabel{node6}{node1}[0.9][-1cm]{\pgfbox[center,center]{$Release_2$}}
\end{pgfpicture}
\caption{LTS Representation of Example 2}
\label{example2lts}
\end{figure}

\begin{example}
\normalfont{In the final example shown in Figure \ref{example3}, each process can request access to the critical resource if 
it is thinking, take the critical resource if it is waiting for it and requested access before the other process, 
and release the critical resource if it is using it. Note that this program is the result of transforming an implementation of 
Lamport's bakery algorithm \cite{LAMPORT74} for two processes as shown in \cite{HAMILTON15}. Although the original 
program makes use of numbered tickets and is therefore an infinite state system, the use of tickets is completely transformed 
away and the resulting program has a finite number of states. The LTS representation of this program is shown in Figure \ref{example3lts}.}
\end{example}
\begin{figure}[htbp]
\hspace*{2cm}
\begin{tabular}{l}
$\expr{\where{\app{\app{\var{Cons}}{\var{(ObsState~T~T)}}}{\brackets{\app{\var{f_1}}{\var{es}}}}}{
\fundef{\var{f_1}}{\abs{\var{es}}{\smallcas{\var{es}}{\app{\app{\var{Cons}}{\var{e}}}{\var{es}}}{\longcas{\var{e}}{\var{Request_1}}{\app{\app{\var{Cons}}{\var{(ObsState~W~T)}}}{\brackets{\app{\var{f_2}}{\var{es}}}}}{\var{Request_2}}{\app{\app{\var{Cons}}{\var{(ObsState~T~W)}}}{\brackets{\app{\var{f_3}}{\var{es}}}}}{\var{\wildcard}}{\app{\app{\var{Cons}}{\var{(ObsState~T~T)}}}{\brackets{\app{\var{f_1}}{\var{es}}}}}}}} \\
\fundef{\var{f_2}}{\abs{\var{es}}{\smallcas{\var{es}}{\app{\app{\var{Cons}}{\var{e}}}{\var{es}}}{\longcas{\var{e}}{\var{Take_1}}{\app{\app{\var{Cons}}{\var{(ObsState~U~T)}}}{\brackets{\app{\var{f_4}}{\var{es}}}}}{\var{Request_2}}{\app{\app{\var{Cons}}{\var{(ObsState~W~W)}}}{\brackets{\app{\var{f_6}}{\var{es}}}}}{\var{\wildcard}}{\app{\app{\var{Cons}}{\var{(ObsState~W~T)}}}{\brackets{\app{\var{f_2}}{\var{es}}}}}}}} \\
\fundef{\var{f_3}}{\abs{\var{es}}{\smallcas{\var{es}}{\app{\app{\var{Cons}}{\var{e}}}{\var{es}}}{\longcas{\var{e}}{\var{Take_2}}{\app{\app{\var{Cons}}{\var{(ObsState~T~U)}}}{\brackets{\app{\var{f_5}}{\var{es}}}}}{\var{Request_1}}{\app{\app{\var{Cons}}{\var{(ObsState~W~W)}}}{\brackets{\app{\var{f_7}}{\var{es}}}}}{\var{\wildcard}}{\app{\app{\var{Cons}}{\var{(ObsState~T~W)}}}{\brackets{\app{\var{f_3}}{\var{es}}}}}}}} \\
\fundef{\var{f_4}}{\abs{\var{es}}{\smallcas{\var{es}}{\app{\app{\var{Cons}}{\var{e}}}{\var{es}}}{\longcas{\var{e}}{\var{Release_1}}{\app{\app{\var{Cons}}{\var{(ObsState~T~T)}}}{\brackets{\app{\var{f_1}}{\var{es}}}}}{\var{Request_2}}{\app{\app{\var{Cons}}{\var{(ObsState~U~W)}}}{\brackets{\app{\var{f_8}}{\var{es}}}}}{\var{\wildcard}}{\app{\app{\var{Cons}}{\var{(ObsState~U~T)}}}{\brackets{\app{\var{f_4}}{\var{es}}}}}}}} \\
\fundef{\var{f_5}}{\abs{\var{es}}{\smallcas{\var{es}}{\app{\app{\var{Cons}}{\var{e}}}{\var{es}}}{\longcas{\var{e}}{\var{Release_2}}{\app{\app{\var{Cons}}{\var{(ObsState~T~T)}}}{\brackets{\app{\var{f_1}}{\var{es}}}}}{\var{Request_1}}{\app{\app{\var{Cons}}{\var{(ObsState~W~U)}}}{\brackets{\app{\var{f_9}}{\var{es}}}}}{\var{\wildcard}}{\app{\app{\var{Cons}}{\var{(ObsState~T~U)}}}{\brackets{\app{\var{f_5}}{\var{es}}}}}}}} \\
\fundef{\var{f_6}}{\abs{\var{es}}{\smallcas{\var{es}}{\app{\app{\var{Cons}}{\var{e}}}{\var{es}}}{\cas{\var{e}}{\var{Take_1}}{\app{\app{\var{Cons}}{\var{(ObsState~U~W)}}}{\brackets{\app{\var{f_8}}{\var{es}}}}}{\var{\wildcard}}{\app{\app{\var{Cons}}{\var{(ObsState~W~W)}}}{\brackets{\app{\var{f_6}}{\var{es}}}}}}}} \\
\fundef{\var{f_7}}{\abs{\var{es}}{\smallcas{\var{es}}{\app{\app{\var{Cons}}{\var{e}}}{\var{es}}}{\cas{\var{e}}{\var{Take_2}}{\app{\app{\var{Cons}}{\var{(ObsState~W~U)}}}{\brackets{\app{\var{f_9}}{\var{es}}}}}{\var{\wildcard}}{\app{\app{\var{Cons}}{\var{(ObsState~W~W)}}}{\brackets{\app{\var{f_7}}{\var{es}}}}}}}} \\
\fundef{\var{f_8}}{\abs{\var{es}}{\smallcas{\var{es}}{\app{\app{\var{Cons}}{\var{e}}}{\var{es}}}{\cas{\var{e}}{\var{Release_1}}{\app{\app{\var{Cons}}{\var{(ObsState~T~W)}}}{\brackets{\app{\var{f_3}}{\var{es}}}}}{\var{\wildcard}}{\app{\app{\var{Cons}}{\var{(ObsState~U~W)}}}{\brackets{\app{\var{f_8}}{\var{es}}}}}}}} \\
\fundef{\var{f_9}}{\abs{\var{es}}{\smallcas{\var{es}}{\app{\app{\var{Cons}}{\var{e}}}{\var{es}}}{\cas{\var{e}}{\var{Release_2}}{\app{\app{\var{Cons}}{\var{(ObsState~W~T)}}}{\brackets{\app{\var{f_2}}{\var{es}}}}}{\var{\wildcard}}{\app{\app{\var{Cons}}{\var{(ObsState~W~U)}}}{\brackets{\app{\var{f_9}}{\var{es}}}}}}}}}}$
\end{tabular} 
\caption{Example 3}
\label{example3}
\end{figure}
\begin{figure}[htbp]
\begin{pgfpicture}{0cm}{1cm}{20cm}{10cm}
\pgfnodebox{node1}[stroke]{\pgfxy(8.0,9.5)}{\parbox{1.1cm}{$f_1 \\ s_1 = T \\ s_2 = T$}}{5pt}{5pt}
\pgfnodebox{node2}[stroke]{\pgfxy(5.0,7.5)}{\parbox{1.1cm}{$f_2 \\ s_1 = W \\ s_2 = T$}}{5pt}{5pt}
\pgfsetarrowsend{stealth}
\pgfnodeconnline{node1}{node2}
\pgfnodelabel{node1}{node2}[0.5][0pt]{\pgfbox[center,center]{$Request_1$}}
\pgfnodebox{node3}[stroke]{\pgfxy(11.0,7.5)}{\parbox{1.1cm}{$f_3 \\ s_1 = T \\ s_2 = W$}}{5pt}{5pt}
\pgfnodeconnline{node1}{node3}
\pgfnodelabel{node1}{node3}[0.5][0pt]{\pgfbox[center,center]{$Request_2$}}
\pgfnodebox{node4}[stroke]{\pgfxy(2.0,4.5)}{\parbox{1.1cm}{$f_4 \\ s_1 = U \\ s_2 = T$}}{5pt}{5pt}
\pgfnodeconnline{node2}{node4}
\pgfnodelabel{node2}{node4}[0.5][0pt]{\pgfbox[center,center]{$Take_1$}}
\pgfnodeconncurve{node4}{node1}{90}{180}{2cm}{2cm}
\pgfnodelabel{node4}{node1}[0.9][1cm]{\pgfbox[center,center]{$Release_1$}}
\pgfnodebox{node6}[stroke]{\pgfxy(5.0,4.5)}{\parbox{1.1cm}{$f_6 \\ s_1 = W \\ s_2 = W$}}{5pt}{5pt}
\pgfnodeconnline{node2}{node6}
\pgfnodelabel{node2}{node6}[0.5][0pt]{\pgfbox[center,center]{$Request_2$}}
\pgfnodebox{node7}[stroke]{\pgfxy(11.0,4.5)}{\parbox{1.1cm}{$f_7 \\ s_1 = W \\ s_2 = W$}}{5pt}{5pt}
\pgfnodeconnline{node3}{node7}
\pgfnodelabel{node3}{node7}[0.5][0pt]{\pgfbox[center,center]{$Request_1$}}
\pgfnodebox{node5}[stroke]{\pgfxy(14.0,4.5)}{\parbox{1.1cm}{$f_5 \\ s_1 = T \\ s_2 = U$}}{5pt}{5pt}
\pgfnodeconnline{node3}{node5}
\pgfnodelabel{node3}{node5}[0.5][0pt]{\pgfbox[center,center]{$Take_2$}}
\pgfnodeconncurve{node5}{node1}{90}{0}{2cm}{2cm}
\pgfnodelabel{node5}{node1}[0.9][-1cm]{\pgfbox[center,center]{$Release_2$}}
\pgfnodebox{node8}[stroke]{\pgfxy(5.0,1.5)}{\parbox{1.1cm}{$f_8 \\ s_1 = U \\ s_2 = W$}}{5pt}{5pt}
\pgfnodeconnline{node6}{node8}
\pgfnodelabel{node6}{node8}[0.5][0pt]{\pgfbox[center,center]{$Take_1$}}
\pgfnodeconnline{node4}{node8}
\pgfnodelabel{node4}{node8}[0.5][0pt]{\pgfbox[center,center]{$Request_2$}}
\pgfnodeconnline{node8}{node3}
\pgfnodelabel{node8}{node3}[0.3][0pt]{\pgfbox[center,center]{$Release_1$}}
\pgfnodebox{node9}[stroke]{\pgfxy(11.0,1.5)}{\parbox{1.1cm}{$f_9 \\ s_1 = W \\ s_2 = U$}}{5pt}{5pt}
\pgfnodeconnline{node7}{node9}
\pgfnodelabel{node7}{node9}[0.5][0pt]{\pgfbox[center,center]{$Take_2$}}
\pgfnodeconnline{node5}{node9}
\pgfnodelabel{node5}{node9}[0.5][0pt]{\pgfbox[center,center]{$Request_1$}}
\pgfnodeconnline{node9}{node2}
\pgfnodelabel{node9}{node2}[0.3][0pt]{\pgfbox[center,center]{$Release_2$}}
\end{pgfpicture}
\caption{LTS Representation of Example 3}
\label{example3lts}
\end{figure}

\section{Specification of Temporal Properties}

In this section, we describe how temporal properties of reactive systems are specified.
We use Linear-time Temporal Logic (LTL), in which the set of well-founded formulae (WFF) 
are defined inductively as follows. All atomic propositions $p$ are in WFF; if $\varphi$ and $\psi$ are in WFF, then so are:
\begin{itemize}
\item $\neg \varphi$
\item $\varphi \vee \psi$
\item $\varphi \wedge \psi$
\item $\varphi \Rightarrow \psi$
\item $\Box \varphi$
\item $\Diamond \varphi$
\item $\ocircle \varphi$
\end{itemize}
The temporal operator $\Box \varphi$ means that $\varphi$ is {\em always} true; this is used to express {\em safety}
properties. The temporal operator $\Diamond \varphi$ means that $\varphi$ will {\em eventually} be true; this is used to 
express {\em liveness} properties. The temporal operator $\ocircle \varphi$ means that $\varphi$ is true in the {\em next} state.
These modalities can be combined to obtain new modalities; for example, $\Box \Diamond \varphi$ means that $\varphi$ 
is true infinitely often, and $\Diamond \Box \varphi$ means that $\varphi$ is eventually true forever. 
Fairness constraints can also be specified for some external events (those belonging to the set $F$) which require that they occur infinitely often. For the examples given in this paper, it is assumed that all external events belong to $F$.

Here, propositional models for linear-time temporal formulas consist of a list of observable states $\pi = [s_0,s_1,\ldots]$. 
The satisfaction relation is extended to formulas in LTL for a model $\pi$ and position $i$ as follows. 
\begin{center}
\begin{tabular}{lcl}
$\pi,i \vDash p$ & iff & $p \in s_i$ \\
$\pi,i \vDash \neg \varphi$ & iff & $\pi,i \nvDash \varphi$ \\
$\pi,i \vDash \varphi \vee \psi$ & iff & $\pi,i \vDash \varphi$ or $\pi,i \vDash \psi$ \\
$\pi,i \vDash \varphi \wedge \psi$ & iff & $\pi,i \vDash \varphi$ and $\pi,i \vDash \psi$ \\
$\pi,i \vDash \varphi \Rightarrow \psi$ & iff & $\pi,i \nvDash \varphi$ or $\pi,i \vDash \psi$ \\
$\pi,i \vDash \Box \varphi$ & iff & $\forall j \geq i.\pi,j \vDash \varphi$ \\
$\pi,i \vDash \Diamond \varphi$ & iff & $\exists j \geq i.\pi,j \vDash \varphi$ \\
$\pi,i \vDash \ocircle \varphi$ & iff & $\pi,i+1 \vDash \varphi$
\end{tabular}
\end{center}
A formula $\varphi$ holds in model $\pi$ if it holds at position 0 i.e. $\pi,0 \vDash \varphi$.

The atomic propositions of these temporal formulae can be trivially translated into our functional language.
For our verification rules, we define the following datatype for truth values:
$$TruthVal ::= \mathit{True}~|~\mathit{False}~|~\mathit{Undefined}$$
We use a Kleene three-valued logic because our verification rules must always return an answer, but some of the properties to be verified may give an undefined outcome.
For our example programs which attempt to implement mutual exclusion, the following two properties are defined.
Within these temporal properties, we use the variable $s$ to denote the current observable state whose properties are being specified.
\begin{property}[Mutual Exclusion]
\normalfont{This is a safety property which specifies that both processes cannot be using the critical resource at the same time.
This can be specified as follows: \\
\hspace*{4cm} $\Box \expr{\brackets{\smallcas{\var{s}}{\var{ObsState~s_1~s_2}}{\cas{\var{s_1}}{\var{U}}{\cas{\var{s_2}}{\var{U}}{\var{False}}{\var{\wildcard}}{\var{True}}}{\var{\wildcard}}{\var{True}}}}}$
}
\end{property}
\begin{property}[Non-Starvation]
\normalfont{This is a liveness property which specifies that each process must eventually get to use the critical resource 
if they are waiting for it. This can be specified for process 1 as follows (the specification of this property for process 2 is similar): \\
\hspace*{1.5cm} $\cont{\Box (\brackets{\smallcas{\var{s}}{\var{ObsState~s_1~s_2}}{\cas{\var{s_1}}{\var{W}}{\var{True}}{\var{\wildcard}}{\var{False}}}}}
{\Rightarrow \Diamond \expr{\brackets{\smallcas{\var{s}}{\var{ObsState~s_1~s_2}}{\cas{\var{s_1}}{\var{U}}{\var{True}}{\var{\wildcard}}{\var{False}}}})}}$
}
\end{property}

\section{Verification of Temporal Properties}

In this section, we show how temporal properties of reactive systems defined in our functional language can be verified.
We define our verification rules on the restricted form of program defined in Figure \ref{simplified} as shown in Figure \ref{proofrules}.
\begin{figure}[htb]
\begin{center}
\begin{tabular}[t]{@{\hspace*{0mm}}l@{\hspace*{1mm}}l@{\hspace*{1mm}}c@{\hspace*{1mm}}l@{\hspace*{0mm}}}
(1) & $\expr{\prove{\var{e}}{(\varphi \wedge \psi)}{\phi}{\rho}}$ &  = & $(\expr{\prove{\var{e}}{\varphi}{\phi}{\rho}}) \wedge_3 (\expr{\prove{\var{e}}{\psi}{\phi}{\rho}})$ \\
(2) & $\expr{\prove{\var{e}}{(\varphi \vee \psi)}{\phi}{\rho}}$ &  = & $(\expr{\prove{\var{e}}{\varphi}{\phi}{\rho}}) \vee_3 (\expr{\prove{\var{e}}{\psi}{\phi}{\rho}})$ \\
(3) & $\expr{\prove{\var{e}}{(\varphi \Rightarrow \psi)}{\phi}{\rho}}$ &  = & $(\expr{\prove{\var{e}}{\varphi}{\phi}{\rho}}) \Rightarrow_3 (\expr{\prove{\var{e}}{\psi}{\phi}{\rho}})$ \\
(4) & $\expr{\prove{\var{e}}{(\neg \varphi)}{\phi}{\rho}}$ &  = & $\neg_3 (\expr{\prove{\var{e}}{\varphi}{\phi}{\rho}})$ \\
(5a) & $\expr{\prove{\app{\app{\var{Cons}}{\var{e_0}}}{\var{e_1}}}{(\Box \varphi)}{\phi}{\rho}}$ &  = & $(\expr{\prove{\app{\app{\var{Cons}}{\var{e_0}}}{\var{e_1}}}{\varphi}{\phi}{\emptyset}}) \wedge_3 
(\expr{\prove{\var{e_1}}{(\Box \varphi)}{\phi}{\rho}})$ \\
(5b) & $\expr{\prove{\app{\app{\var{Cons}}{\var{e_0}}}{\var{e_1}}}{(\Diamond \varphi)}{\phi}{\rho}}$ &  = & $(\expr{\prove{\app{\app{\var{Cons}}{\var{e_0}}}{\var{e_1}}}{\varphi}{\phi}{\emptyset}}) \vee_3 
(\expr{\prove{\var{e_1}}{(\Diamond \varphi)}{\phi}{\rho}})$ \\
(5c) & $\expr{\prove{\app{\app{\var{Cons}}{\var{e_0}}}{\var{e_1}}}{(\ocircle \varphi)}{\phi}{\rho}}$ &  = & $\expr{\prove{\var{e_1}}{\varphi}{\phi}{\rho}}$ \\
(5d) & $\expr{\prove{\app{\app{\var{Cons}}{\var{e_0}}}{\var{e_1}}}{\varphi}{\phi}{\rho}}$ &  = & $v$, where $\varphi[e_0/s] \Downarrow v$ \\
(6a) & $\expr{\prove{\app{\var{f}}{\args{x_1}{x_n}}}{(\Box \varphi)}{\phi}{\rho}}$ & = & $\left\{\begin{tabular}[c]{@{\hspace*{0mm}}l@{\hspace*{1mm}}l@{\hspace*{0mm}}}
$\expr{\var{True}}$, & if $f \in \rho$ \\
$\expr{\prove{\var{e}[x_1/x_1',\ldots,x_n/x_n']}{(\Box \varphi)}{\phi}{(\rho \cup \{f\})}}$, & otherwise 
\end{tabular}\right.$ \\
& & & where $\phi(f) = \expr{\abs{\args{\var{x_1'}}{\var{x_n'}}}{\var{e}}}$ \\
(6b) & $\expr{\prove{\app{\var{f}}{\args{x_1}{x_n}}}{(\Diamond \varphi)}{\phi}{\rho}}$ & = & $\left\{\begin{tabular}[c]{@{\hspace*{0mm}}l@{\hspace*{1mm}}l@{\hspace*{0mm}}}
$\expr{\var{False}}$, & if $f \in \rho$ \\
$\expr{\prove{\var{e}[x_1/x_1',\ldots,x_n/x_n']}{(\Diamond \varphi)}{\phi}{(\rho \cup \{f\})}}$, & otherwise 
\end{tabular}\right.$ \\
& & & where $\phi(f) = \expr{\abs{\args{\var{x_1'}}{\var{x_n'}}}{\var{e}}}$ \\
(6c) & $\expr{\prove{\app{\var{f}}{\args{x_1}{x_n}}}{\varphi}{\phi}{\rho}}$ & = & 
$\left\{\begin{tabular}[c]{@{\hspace*{0mm}}l@{\hspace*{1mm}}l@{\hspace*{0mm}}}
$\expr{\var{Undefined}}$, & if $f \in \rho$ \\
$\expr{\prove{\var{e}[x_1/x_1',\ldots,x_n/x_n']}{\varphi}{\phi}{(\rho \cup \{f\})}}$, & otherwise 
\end{tabular}\right.$ \\
& & & where $\phi(f) = \expr{\abs{\args{\var{x_1'}}{\var{x_n'}}}{\var{e}}}$ \\
(7a) & \multicolumn{3}{@{\hspace*{0mm}}l@{\hspace*{0mm}}}{$\expr{\prove{\casedots{\var{x}}{\var{p_1}}{\var{e_1}}{\var{p_n}}{\var{e_n}}}{(\Diamond \varphi)}{\phi}{\rho}}$} \\
& & = & $(\bigvee\limits_{p_i \in F} \expr{\prove{\var{e_i}}{(\Diamond \varphi)}{\phi}{\rho}}) \vee_3 (\bigwedge\limits_{i = 1}^{n}\expr{\prove{\var{e_i}}{(\Diamond \varphi)}{\phi}{\rho}})$ \\
(7b) & \multicolumn{3}{@{\hspace*{0mm}}l@{\hspace*{0mm}}}{$\expr{\prove{\casedots{\var{x}}{\var{p_1}}{\var{e_1}}{\var{p_n}}{\var{e_n}}}{\varphi}{\phi}{\rho}}$} \\
& & = & $\bigwedge\limits_{i=1}^{n} \expr{\prove{\var{e_i}}{\varphi}{\phi}{\rho}}$ \\
(8) & $\expr{\prove{\app{\var{x}}{\args{\var{e_1}}{\var{e_n}}}}{\varphi}{\phi}{\rho}}$ &  = & $\mathit{Undefined}$ \\
(9) & $\expr{\prove{\letexp{\var{x}}{\var{e_0}}{\var{e_1}}}{\varphi}{\phi}{\rho}}$ & = & $\expr{\prove{\var{e_1}}{\varphi}{\phi}{\rho}}$ \\
(10) & \multicolumn{3}{@{\hspace*{0mm}}l@{\hspace*{0mm}}}{$\expr{\prove{\Where{\var{e_0}}{\var{f_1}}{\var{e_1}}{\var{f_n}}{\var{e_n}}}{\varphi}{\phi}{\rho}}$} \\
& & = & $\expr{\prove{\var{e_0}}{\varphi}{(\phi \cup \{f_1 \mapsto e_1,\ldots,f_n \mapsto e_n\})}{\rho}}$
\end{tabular}
\end{center}
\caption{Verification Rules}
\label{proofrules}
\end{figure} 
The parameter $\varphi$ denotes the property to be verified and $\phi$ denotes the function variable environment.
$\rho$ denotes the set of function calls previously encountered; this is used for the detection of loops to ensure termination.
$\rho$ is also used in the verification of the $\Box$ operator (which evaluates to $True$ on encountering a loop), and the 
verification of the $\Diamond$ operator (which evaluates to $False$ on encountering a loop); $\rho$ is reset to empty when 
the verification moves inside these temporal operators. For all other temporal formulae, the value $\mathit{Undefined}$ is 
returned on encountering a loop.

The verification rules can be explained as follows. Rules (1-4) deal with the logical connectives $\wedge$, $\vee$, $\Rightarrow$ 
and $\neg$. These are implemented in our language in the usual way for a Kleene three-valued logic using the corresponding
operators $\wedge_3$, $\vee_3$, $\Rightarrow_3$ and $\neg_3$. Rules (5a-d) deal with a constructed stream of states. 
In rule (5a), if we are trying to verify that a property is always true, then we verify that it is true for the first state (with 
$\rho$ reset to empty) and is always true in all remaining states. In rule (5b), if we are trying to verify that a property is 
eventually true, then we verify that it is either true for the first state (with $\rho$ reset to empty) or is eventually true in all 
remaining states. In rule (5c), if we are trying to verify that a property is true in the next state then we verify that the property 
is true for the next state. In rule (5d), if we are trying to verify that a property is true in the current state then we verify that the 
property is true for the current state by evaluating the property using the value of the current state for the state variable $s$. 
Rules (6a-c) deal with function calls. In rule (6a), if we are trying to verify that a property is always true, then if the function 
call has been encountered before while trying to verify the same property we can return the value {\em True}; this corresponds 
to the standard greatest fixed point calculation normally used for the $\Box$ operator in which the property is initially assumed 
to be {\em True} for all states. Otherwise, the function is unfolded and added to the set of previously encountered function calls 
for this property. In rule (6b), if we are trying to verify that a property is eventually true, then if the function call has been 
encountered before while trying to verify the same property we can return the value {\em False}; this corresponds to the 
standard least fixed point calculation normally used for the $\Diamond$ property in which the property is initially assumed to be 
{\em False} for all states. Otherwise, the function is unfolded 
and added to the set of previously encountered function calls for this property. In rule (6c), if we are trying to verify that any 
other property is true, then if the function call has been encountered before we can return the value $\mathit{Undefined}$ since 
a loop has been detected. Otherwise, the function is unfolded and added to the set of previously encountered function calls. 
Rules (7a-b) deal with {\bf case} expressions. In rule (7a), if we are trying to verify that a property is eventually true,
then we verify that it is either eventually true for at least one of the branches for which there is a fairness assumption
(since these branches must be selected eventually), or that it is eventually true for all branches. 
In Rule (7b), if we are trying to verify that any other property is true, then we verify that it is true for all branches. 
In rule (8), if we encounter a free variable, then we return the value $\mathit{Undefined}$ since we cannot determine the value 
of the variable; this must be a {\bf let} variable which has been abstracted, so no information can be determined for it. 
In rule (9), in order to verify that a property is true for a {\bf let} expression, we verify that it is true for the {\bf let} body; 
this is where we perform abstraction of the extracted sub-expression. In rule (10), for a {\bf where} expression, the function 
definitions are added to the environment $\phi$.

\begin{theorem}[Soundness]
\normalfont{$\forall e \in Prog, es \in List~Event, \pi \in List~State, \varphi \in$ WFF: \\
$(e~es \overset{r*}{\leadsto} \pi) \wedge (\expr{\prove{e}{\varphi}{\emptyset}{\emptyset}} = True \Rightarrow \pi,0 \vDash \varphi) \wedge (\expr{\prove{e}{\varphi}{\emptyset}{\emptyset}} = False \Rightarrow \pi,0 \nvDash \varphi)$}
\end{theorem}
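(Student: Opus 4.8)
The plan is to prove a strengthened statement by induction, generalising the theorem's special case $\phi=\emptyset,\rho=\emptyset$ to arbitrary well-formed environments $\phi$ (containing exactly the $\mathbf{where}$-definitions in scope) and arbitrary loop sets $\rho$. First I would record a \emph{correspondence lemma}: because the crucial syntactic property of the simplified form in Figure~\ref{simplified} is that every function is tail recursive, the call-by-name reduction of $e\,es$ unfolds deterministically into a sequence of $Cons$ steps, so that $e\,es\overset{r*}{\leadsto}\pi$ with $\pi=[s_0,s_1,\ldots]$, where each head $s_i$ is the $ObsState$ emitted by one function body and the tail is selected by the next event of $es$. The unique determination of $\pi$ by $es$ is what lets the branch-exhaustive verification ${\cal P}[\![ e ]\!]$ be related to this one trace, since ${\cal P}$ itself does not inspect $es$ but ranges over all $\mathbf{case}$ branches.

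The induction is lexicographic, ordered primarily by the size of $\varphi$ and secondarily by the number of function names of $e$ not yet in $\rho$; the latter strictly decreases at each unfolding (rules~(6a)--(6c) add $f$ to $\rho$) and is bounded because the program has finitely many functions, which simultaneously proves that ${\cal P}$ terminates. Rules~(1)--(4) follow immediately from the induction hypothesis and the Kleene definitions of $\wedge_3,\vee_3,\Rightarrow_3,\neg_3$ against the matching clauses of $\vDash$. Rule~(5d) is the atomic base case: the non-temporal formula is evaluated on the head, $\varphi[e_0/s]\Downarrow v$, which is exactly $\pi,0\vDash\varphi \Leftrightarrow \varphi\in s_0$. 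For the temporal operators I would unfold the recursive reading of $\vDash$: $\pi,0\vDash\Box\varphi$ iff $\pi,0\vDash\varphi$ and $\pi',0\vDash\Box\varphi$ for the suffix $\pi'=[s_1,\ldots]$, matching the $\wedge_3$ split of rule~(5a) with $\rho$ reset to $\emptyset$ on the head; dually $\Diamond$ with $\vee_3$ in rule~(5b); and $\ocircle$ by a single step to the tail in rule~(5c).

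The substantive cases are the function-call and $\mathbf{case}$ rules, where the fixed-point content and the fairness set $F$ enter. For $\Box$, unfolding through rule~(6a) and taking the conjunction over all branches in rule~(7b) means that ${\cal P}$ explores every reachable state; returning $\mathit{True}$ on closing a loop ($f\in\rho$) is the greatest-fixed-point step, and I would show by the induction that a $\mathit{True}$ result certifies $\varphi$ at every reachable state, whence $\pi,j\vDash\varphi$ for all $j$ and so $\pi,0\vDash\Box\varphi$. For $\Diamond$, rule~(6b) returns $\mathit{False}$ on a loop (the least-fixed-point step), and rule~(7a) is the key: its first disjunct $\bigvee_{p_i\in F}$ invokes the assumption that every event in $F$ occurs infinitely often, so the corresponding branch is guaranteed to be taken along $\pi$, discharging the eventuality even when no single step satisfies $\varphi$; its second disjunct covers the case where every branch reaches a satisfying state, which the actual branch taken by $\pi$ must therefore also do. Rule~(8) ($\mathit{Undefined}$ for abstracted $\mathbf{let}$ variables) and rules~(9)--(10) are bookkeeping, discharged by the induction hypothesis after the abstraction in rule~(9) and the environment extension in rule~(10).

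I expect the principal obstacle to be the alignment between ${\cal P}$'s branch-exhaustive exploration and the claim about the single trace $\pi$, and in particular the $\mathit{False}$ direction. The clean $\mathit{True}$ direction works because ``holds at every reachable state'' entails ``holds along $\pi$''; but a $\mathit{False}$ verdict for $\Box$ records that \emph{some} reachable state violates $\varphi$, and to conclude $\pi,0\nvDash\varphi$ one must exhibit that this violating state actually lies on $\pi$. This is precisely where the global fairness assumption ($F$ containing all events) must be leveraged, by arguing that under fairness the distinguished branch sequence leading to the offending state is eventually realised in $\pi$; making this connection rigorous, and dualising it for the $\Diamond$ least-fixed-point case, is the heart of the proof. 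I would therefore isolate it as a separate lemma about fair traces of the finite transition system induced by the simplified program, and then feed it into the function-call and $\mathbf{case}$ cases of the main induction.
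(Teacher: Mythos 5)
The step on which your whole \emph{False} direction rests --- the proposed fair-trace lemma asserting that ``the distinguished branch sequence leading to the offending state is eventually realised in $\pi$'' --- is false, and this is a genuine gap rather than a technicality. Fairness of $es$ only guarantees that each event of $F$ occurs infinitely often \emph{in the stream}; it does not guarantee that a given event is the first relevant one to arrive at any particular visit to a given state, so it cannot force the trace through a designated branch. Concretely, for the program of Figure \ref{example1}, ${\cal P}[\![e]\!]~(\Box\varphi)~\emptyset~\emptyset = \mathit{False}$ for mutual exclusion because $f_9$ is reachable; but there are fair event lists whose traces never visit $f_9$: arrange the stream so that whenever the system sits at $f_7$ the event $Release_1$ arrives before $Take_2$, and whenever it sits at $f_8$ the event $Release_2$ arrives before $Take_1$, inserting all other events at positions where they match only wildcard branches. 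Every event still occurs infinitely often, yet $\pi,0 \vDash \Box\varphi$ for this $\pi$, so the conjunct $({\cal P}[\![e]\!]~\varphi~\emptyset~\emptyset = \mathit{False} \Rightarrow \pi,0 \nvDash \varphi)$, read with your universal quantification over fair $\pi$, is refuted, and no strengthening of the fairness lemma can save it. The only tenable repair is to weaken the \emph{False} conjunct to an \emph{existential} claim over traces --- \emph{False} certifies that \emph{some} fair trace violates $\varphi$ --- which is exactly the content the paper assigns to Theorem 6.1 (Validity), where the rules of Figure \ref{counterexample} construct that particular $\pi$. The same misalignment infects your reading of rule (7a) in the \emph{True} direction: your claim that ``the corresponding branch is guaranteed to be taken along $\pi$'' fails for the same reason whenever a state has more than one non-wildcard branch, so the fair disjunct also needs a per-trace argument, not the blanket realisability claim you invoke.

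On the comparison with the paper: its entire proof is the single sentence that the result follows by structural induction on the program $e$, so it neither confronts nor resolves the difficulty you correctly isolated; your write-up is considerably more explicit about where the work lies. Your induction measure is also a genuine and arguably necessary refinement: plain structural induction on $e$ does not obviously go through rules (6a--c), since unfolding replaces a call $f~x_1 \ldots x_n$ by a function body that is not a subterm; your lexicographic measure (size of $\varphi$, then the number of functions not yet in $\rho$) is exactly what makes the unfolding cases well-founded, and, as you note, it simultaneously yields the paper's Theorem 5.2 (Termination). But none of this repairs the \emph{False} direction: the obstacle is not the induction scheme but the per-trace universal quantification, and your proposal as written would stall at the unprovable fair-trace lemma.
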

\begin{proof}
The proof of this is by structural induction on the program $e$.
\end{proof}
\begin{theorem}[Termination]
\normalfont{$\forall e \in$ Prog, $\varphi \in$ WFF: $\expr{\prove{e}{\varphi}{\emptyset}{\emptyset}}$ always terminates.}
\end{theorem}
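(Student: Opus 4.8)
The plan is to prove termination \emph{not} by plain structural induction on $e$ --- which fails, since rule~(6) replaces a function call by the body of $f$ and so can strictly increase the size of the expression under analysis --- but by exhibiting a well-founded measure that strictly decreases on every recursive call of $\mathcal{P}$. First I would fix, once and for all, the finite set $\mathcal{F}$ of all function symbols occurring in the (finite) program; since any function looked up in $\phi$ by rule~(6) or added to $\phi$ by rule~(10) is one of these symbols, and since $\rho$ only ever collects such symbols, we always have $\rho \subseteq \mathcal{F}$. To each recursive call $\prove{e}{\varphi}{\phi}{\rho}{}$ I then assign the triple
$$\mu(e,\varphi,\rho) \;=\; \langle\, |\varphi|,\ |\mathcal{F}\setminus\rho|,\ |e| \,\rangle,$$
ordered lexicographically, where $|\varphi|$ counts the logical and temporal connectives of $\varphi$, $|\mathcal{F}\setminus\rho|$ is the number of not-yet-unfolded function symbols, and $|e|$ is the number of nodes of $e$. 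Each component is a natural number, so the lexicographic order is well-founded, and it suffices to show that every recursive call on the right-hand side of each rule has strictly smaller $\mu$ than its left-hand side.

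I would then discharge this by a case analysis on the rules, organised by which component of $\mu$ pays for the call. In rules~(1)--(4), in the first conjunct/disjunct of~(5a)/(5b), and in~(5c), a connective is stripped, so $|\varphi|$ strictly decreases and the first component dominates whatever happens to the other two. In the ``otherwise'' branch of~(6a)--(6c) the formula is unchanged and a fresh $f \notin \rho$ is added to $\rho$, so $|\mathcal{F}\setminus\rho|$ strictly decreases; this is the crucial step, since it dominates the possible increase of $|e|$ when $f~x_1\ldots x_n$ is replaced by the body of $f$. In the second conjunct/disjunct of~(5a)/(5b) and in rules~(7a), (7b), (9) and~(10), the formula and $\rho$ are both unchanged while the analysed term is replaced by a proper subexpression (a $Cons$ tail, a {\bf case} branch, a {\bf let} body, or a {\bf where} body), so $|e|$ strictly decreases. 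Rules~(5d), (8) and the ``$f \in \rho$'' branches of~(6a)--(6c) are non-recursive base cases. Thus $\mu$ strictly decreases on every recursive call, giving termination.

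The main obstacle I anticipate, and the point that makes the simple structural argument break, is precisely rule~(6): function unfolding can enlarge the expression, and moreover rules~(5a)/(5b) reset $\rho$ to $\emptyset$, which on its own would reset $|\mathcal{F}\setminus\rho|$ back to its maximum. The lexicographic design resolves both: unfolding is bounded because each unfold consumes one element of the finite $\mathcal{F}$ within a fixed temporal scope, and each reset of $\rho$ is ``paid for'' by a simultaneous strict decrease of $|\varphi|$ (resets occur only in~(5a)/(5b), which also strip a $\Box$ or $\Diamond$). The key lemmas to verify carefully are therefore that no rule ever increases $|\varphi|$, and that $|\mathcal{F}\setminus\rho|$ never increases on any call that leaves $|\varphi|$ unchanged. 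A secondary obstacle is that the base case~(5d) delegates to the object-language evaluation $\varphi[e_0/s]\Downarrow v$, so termination of $\mathcal{P}$ there reduces to termination of that evaluation; this is justified by noting that at~(5d) the formula $\varphi$ is atomic (all logical and temporal structure already consumed by the earlier rules) and $e_0$ is a closed observable-state value, and that the translated atomic propositions are total, non-recursive functions over $State$ (as in the mutual exclusion and non-starvation properties above, which are simple nested {\bf case} expressions), so the evaluation converges to a value in $TruthVal$. With these points settled, well-foundedness of $\mu$ yields termination of $\prove{e}{\varphi}{\emptyset}{\emptyset}{}$ for all $e \in \mathrm{Prog}$ and $\varphi \in \mathrm{WFF}$.
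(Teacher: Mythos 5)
Your proposal is correct and is essentially a formalisation of the paper's own (very terse) argument: the paper's appeal to ``a finite number of functions and uses of $\Box$ and $\Diamond$, with verification terminating when a function is re-encountered'' is exactly what your lexicographic measure $\langle\,|\varphi|,\ |\mathcal{F}\setminus\rho|,\ |e|\,\rangle$ encodes, with the reset of $\rho$ in rules (5a)/(5b) paid for by the strict decrease of $|\varphi|$. Your rule-by-rule case analysis, and in particular your explicit justification that the object-language evaluation in rule (5d) terminates because the formula there is an atomic proposition translated as a total, non-recursive function on states, makes precise points the paper leaves implicit, but the underlying approach is the same.
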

\begin{proof}
Proof of termination is quite straightforward since there will be a finite number of functions and uses of the temporal
operators $\Box$ and $\Diamond$, and verification of each of these temporal operators will terminate when a function is
re-encountered. 
\end{proof} \\
Using these rules, we try to verify the two properties (mutual exclusion and non-starvation) for the example programs 
for mutual exclusion given in Section 3. Firstly, distillation is applied to each of the programs. 
\setcounter{example}{0}
\begin{example}
\normalfont{For the program shown in Figure \ref{example1}, Property 2 (non-starvation) holds. The verification of Property 1 (mutual 
exclusion) is shown below where we represent Property 1 by $\Box \varphi$ and the function environment by $\phi$.} \\
\\
\hspace*{0.3cm} $\expr{\prove{\app{\app{\var{Cons}}{\var{(ObsState~T~T)}}}{\brackets{\app{\var{f_1}}{\var{es}}}}}{(\Box \varphi)}{\emptyset}{\emptyset}}$ \\
= \{5a\} \\
\hspace*{0.3cm} $(\expr{\prove{\app{\app{\var{Cons}}{\var{(ObsState~T~T)}}}{\brackets{\app{\var{f_1}}{\var{es}}}}}{\varphi}{\emptyset}{\emptyset}}) \wedge_3 (\expr{\prove{\app{\var{f_1}}{\var{es}}}{(\Box \varphi)}{\emptyset}{\emptyset}})$ \\
= \{5d\} \\
\hspace*{0.3cm} $(\varphi[(ObsState~T~T)/s]) \wedge_3 (\expr{\prove{\app{\var{f_1}}{\var{es}}}{(\Box \varphi)}{\emptyset}{\emptyset}})$ \\
= \{calculation, 6a, 7b, 5a, 5d\} \\
\hspace*{0.3cm} $(\expr{\prove{\app{\var{f_1}}{\var{es}}}{(\Box \varphi)}{\phi}{\{f_1\}}}) \wedge_3 (\expr{\prove{\app{\var{f_2}}{\var{es}}}{(\Box \varphi)}{\phi}{\{f_1\}}}) \wedge_3 (\expr{\prove{\app{\var{f_3}}{\var{es}}}{(\Box \varphi)}{\phi}{\{f_1\}}})$ \\
= \{6a\} \\
\hspace*{0.3cm} $(\expr{\prove{\app{\var{f_2}}{\var{es}}}{(\Box \varphi)}{\phi}{\{f_1\}}}) \wedge_3 (\expr{\prove{\app{\var{f_3}}{\var{es}}}{(\Box \varphi)}{\phi}{\{f_1\}}})$ \\
= \{calculation, 6a, 7b, 5a, 5d\} \\
\hspace*{0.3cm} $(\expr{\prove{\app{\var{f_2}}{\var{es}}}{(\Box \varphi)}{\phi}{\{f_1,f_2\}}}) \wedge_3 (\expr{\prove{\app{\var{f_4}}{\var{es}}}{(\Box \varphi)}{\phi}{\{f_1,f_2\}}}) \wedge_3 (\expr{\prove{\app{\var{f_5}}{\var{es}}}{(\Box \varphi)}{\phi}{\{f_1,f_2\}}})$ \\
\hspace*{0.3cm} $\wedge_3 (\expr{\prove{\app{\var{f_3}}{\var{es}}}{(\Box \varphi)}{\phi}{\{f_1\}}})$ \\
= \{6a\} \\
\hspace*{0.3cm} $(\expr{\prove{\app{\var{f_4}}{\var{es}}}{(\Box \varphi)}{\phi}{\{f_1,f_2\}}}) \wedge_3 (\expr{\prove{\app{\var{f_5}}{\var{es}}}{(\Box \varphi)}{\phi}{\{f_1,f_2\}}}) \wedge_3 (\expr{\prove{\app{\var{f_3}}{\var{es}}}{(\Box \varphi)}{\phi}{\{f_1\}}})$ \\
= \{calculation, 6a, 7b, 5a, 5d\} \\
\hspace*{0.3cm} $(\expr{\prove{\app{\var{f_1}}{\var{es}}}{(\Box \varphi)}{\phi}{\{f_1,f_2,f_4\}}}) \wedge_3 (\expr{\prove{\app{\var{f_4}}{\var{es}}}{(\Box \varphi)}{\phi}{\{f_1,f_2,f_4\}}}) \wedge_3 (\expr{\prove{\app{\var{f_5}}{\var{es}}}{(\Box \varphi)}{\phi}{\{f_1,f_2\}}})$ \\
\hspace*{0.3cm} $\wedge_3 (\expr{\prove{\app{\var{f_3}}{\var{es}}}{(\Box \varphi)}{\phi}{\{f_1\}}})$ \\
= \{6a\} \\
\hspace*{0.3cm} $(\expr{\prove{\app{\var{f_5}}{\var{es}}}{(\Box \varphi)}{\phi}{\{f_1,f_2\}}}) \wedge_3 (\expr{\prove{\app{\var{f_3}}{\var{es}}}{(\Box \varphi)}{\phi}{\{f_1\}}})$ \\
= \{calculation, 6a, 7b, 5a, 5d\} \\
\hspace*{0.3cm} $(\expr{\prove{\app{\var{f_5}}{\var{es}}}{(\Box \varphi)}{\phi}{\{f_1,f_2,f_5\}}}) \wedge_3 (\expr{\prove{\app{\var{f_7}}{\var{es}}}{(\Box \varphi)}{\phi}{\{f_1,f_2,f_5\}}}) \wedge_3 (\expr{\prove{\app{\var{f_8}}{\var{es}}}{(\Box \varphi)}{\phi}{\{f_1,f_2,f_5\}}})$ \\
\hspace*{0.3cm} $\wedge_3 (\expr{\prove{\app{\var{f_3}}{\var{es}}}{(\Box \varphi)}{\phi}{\{f_1\}}})$ \\
= \{6a\} \\
\hspace*{0.3cm} $(\expr{\prove{\app{\var{f_7}}{\var{es}}}{(\Box \varphi)}{\phi}{\{f_1,f_2,f_5\}}}) \wedge_3 (\expr{\prove{\app{\var{f_8}}{\var{es}}}{(\Box \varphi)}{\phi}{\{f_1,f_2,f_5\}}}) \wedge_3 (\expr{\prove{\app{\var{f_3}}{\var{es}}}{(\Box \varphi)}{\phi}{\{f_1\}}})$ \\
= \{calculation, 6a, 7b, 5a, 5d\} \\
\hspace*{0.3cm} {\em False}
\end{example}
\begin{example}
\normalfont{For the program shown in Figure \ref{example2}, Property 1 (mutual exclusion) holds. The verification of Property 2 
(non-starvation) is shown below where we represent Property 2 by $\Box (\varphi \Rightarrow \Diamond \psi)$ and the function 
environment by $\phi$.} \\
\\
\hspace*{0.3cm} $\expr{\prove{\app{\app{\var{Cons}}{\var{(ObsState~T~T)}}}{\brackets{\app{\var{f_1}}{\var{es}}}}}{(\Box (\varphi \Rightarrow \Diamond \psi))}{\emptyset}{\emptyset}}$ \\
= \{5a\} \\
\hspace*{0.3cm} $(\expr{\prove{\app{\app{\var{Cons}}{\var{(ObsState~T~T)}}}{\brackets{\app{\var{f_1}}{\var{es}}}}}{(\varphi \Rightarrow \Diamond \psi)}{\emptyset}{\emptyset}}) \wedge_3 (\expr{\prove{\app{\var{f_1}}{\var{es}}}{(\Box (\varphi \Rightarrow \Diamond \psi))}{\emptyset}{\emptyset}})$ \\
= \{5d\} \\
\hspace*{0.3cm} $((\varphi \Rightarrow \Diamond \psi)[(ObsState~T~T)/s]) \wedge_3 (\expr{\prove{\app{\var{f_1}}{\var{es}}}{(\Box (\varphi \Rightarrow \Diamond \psi))}{\emptyset}{\emptyset}})$ \\
= \{calculation, 3, 6a, 7b, 5a, 5d\} \\
\hspace*{0.3cm} $(\expr{\prove{\app{\var{f_1}}{\var{es}}}{(\Box (\varphi \Rightarrow \Diamond \psi))}{\phi}{\{f_1\}}}) \wedge_3 (\expr{\prove{\app{\var{f_2}}{\var{es}}}{(\Box (\varphi \Rightarrow \Diamond \psi))}{\phi}{\{f_1\}}})$ \\
\hspace*{0.3cm} $\wedge_3 (\expr{\prove{\app{\var{f_3}}{\var{es}}}{(\Box (\varphi \Rightarrow \Diamond \psi))}{\phi}{\{f_1\}}})$ \\
= \{6a\} \\
\hspace*{0.3cm} $(\expr{\prove{\app{\var{f_2}}{\var{es}}}{(\Box (\varphi \Rightarrow \Diamond \psi))}{\phi}{\{f_1\}}}) \wedge_3 (\expr{\prove{\app{\var{f_3}}{\var{es}}}{(\Box (\varphi \Rightarrow \Diamond \psi))}{\phi}{\{f_1\}}})$ \\
= \{calculation, 3, 6a, 7b, 5a, 5d\} \\
\hspace*{0.3cm} $(\expr{\prove{\app{\var{f_2}}{\var{es}}}{(\Box (\varphi \Rightarrow \Diamond \psi))}{\phi}{\{f_1,f_2\}}}) \wedge_3 (\expr{\prove{\app{\var{f_4}}{\var{es}}}{(\Box (\varphi \Rightarrow \Diamond \psi))}{\phi}{\{f_1,f_2\}}})$ \\
\hspace*{0.3cm} $\wedge_3 (\expr{\prove{\app{\var{f_5}}{\var{es}}}{(\Box (\varphi \Rightarrow \Diamond \psi))}{\phi}{\{f_1,f_2\}}}) \wedge_3 (\expr{\prove{\app{\var{f_3}}{\var{es}}}{(\Box (\varphi \Rightarrow \Diamond \psi))}{\phi}{\{f_1\}}})$ \\
= \{6a\} \\
\hspace*{0.3cm} $(\expr{\prove{\app{\var{f_4}}{\var{es}}}{(\Box (\varphi \Rightarrow \Diamond \psi))}{\phi}{\{f_1,f_2\}}}) \wedge_3 (\expr{\prove{\app{\var{f_5}}{\var{es}}}{(\Box (\varphi \Rightarrow \Diamond \psi))}{\phi}{\{f_1,f_2\}}})$ \\
\hspace*{0.3cm} $\wedge_3 (\expr{\prove{\app{\var{f_3}}{\var{es}}}{(\Box (\varphi \Rightarrow \Diamond \psi))}{\phi}{\{f_1\}}})$ \\
= \{calculation, 3, 6a, 7b, 5a, 5d\} \\
\hspace*{0.3cm} $(\expr{\prove{\app{\var{f_1}}{\var{es}}}{(\Box (\varphi \Rightarrow \Diamond \psi))}{\phi}{\{f_1,f_2,f_4\}}}) \wedge_3 (\expr{\prove{\app{\var{f_4}}{\var{es}}}{(\Box (\varphi \Rightarrow \Diamond \psi))}{\phi}{\{f_1,f_2,f_4\}}})$ \\
\hspace*{0.3cm} $\wedge_3 (\expr{\prove{\app{\var{f_5}}{\var{es}}}{(\Box (\varphi \Rightarrow \Diamond \psi))}{\phi}{\{f_1,f_2\}}}) \wedge_3 (\expr{\prove{\app{\var{f_3}}{\var{es}}}{(\Box (\varphi \Rightarrow \Diamond \psi))}{\phi}{\{f_1\}}})$ \\
= \{6a\} \\
\hspace*{0.3cm} $(\expr{\prove{\app{\var{f_5}}{\var{es}}}{(\Box (\varphi \Rightarrow \Diamond \psi))}{\phi}{\{f_1,f_2\}}}) \wedge_3 (\expr{\prove{\app{\var{f_3}}{\var{es}}}{(\Box (\varphi \Rightarrow \Diamond \psi))}{\phi}{\{f_1\}}})$ \\
\ignore{
= \{6a, 7b\} \\
\hspace*{0.3cm} $(\expr{\prove{\app{\app{\var{Cons}}{\var{(ObsState~W~W)}}}{\brackets{\app{\var{f_5}}{\var{es}}}}}{(\Box (\varphi \Rightarrow \Diamond \psi))}{\emptyset}{\{f_1,f_2,f_5\}}})$ \\
\hspace*{0.3cm} $\wedge_3 (\expr{\prove{\app{\var{f_3}}{\var{es}}}{(\Box (\varphi \Rightarrow \Diamond \psi))}{\phi}{\{f_1\}}})$ \\
= \{5a\} \\
\hspace*{0.3cm} $(\expr{\prove{\app{\app{\var{Cons}}{\var{(ObsState~W~W)}}}{\brackets{\app{\var{f_5}}{\var{es}}}}}{(\varphi \Rightarrow \Diamond \psi)}{\emptyset}{\{f_1,f_2,f_5\}}})$ \\
\hspace*{0.3cm} $\wedge_3 (\expr{\prove{\app{\var{f_5}}{\var{es}}}{(\Box (\varphi \Rightarrow \Diamond \psi))}{\emptyset}{\{f_1,f_2,f_5\}}}) \wedge_3 (\expr{\prove{\app{\var{f_3}}{\var{es}}}{(\Box (\varphi \Rightarrow \Diamond \psi))}{\phi}{\{f_1\}}})$ \\
}
= \{calculation, 6a, 7b, 5a, 3, 5b, 6b\} \\
\hspace*{0.3cm} {\em False}
\end{example}
\begin{example}
\normalfont{For the program shown in Figure \ref{example3} both Property 1 (mutual exclusion) and Property 2 (non-starvation) hold.}
\end{example}
\ignore{
\begin{example}
\normalfont{For the program in Figure \ref{example1}, Property 2 (non-starvation) holds. The verification of Property 1 (mutual exclusion) 
is shown below where we represent Property 1 by $\Box \varphi$ and the function environment by $\phi$.} \\
\\
\hspace*{0.3cm} $\expr{\prove{\app{\app{\var{Cons}}{\var{(ObsState~T~T)}}}{\brackets{\app{\var{f_1}}{\var{es}}}}}{(\Box \varphi)}{\emptyset}{\emptyset}}$ \\
= \{5a\} \\
\hspace*{0.3cm} $(\expr{\prove{\app{\app{\var{Cons}}{\var{(ObsState~T~T)}}}{\brackets{\app{\var{f_1}}{\var{es}}}}}{\varphi}{\emptyset}{\emptyset}}) \wedge_3 (\expr{\prove{\app{\var{f_1}}{\var{es}}}{(\Box \varphi)}{\emptyset}{\emptyset}})$ \\
= \{5d\} \\
\hspace*{0.3cm} $(\varphi[(ObsState~T~T)/s]) \wedge_3 (\expr{\prove{\app{\var{f_1}}{\var{es}}}{(\Box \varphi)}{\emptyset}{\emptyset}})$ \\
= \{calculation, 6a, 7b, 5a, 5d\} \\
\hspace*{0.3cm} $(\expr{\prove{\app{\var{f_1}}{\var{es}}}{(\Box \varphi)}{\phi}{\{f_1\}}}) \wedge_3 (\expr{\prove{\app{\var{f_2}}{\var{es}}}{(\Box \varphi)}{\phi}{\{f_1\}}}) \wedge_3 (\expr{\prove{\app{\var{f_3}}{\var{es}}}{(\Box \varphi)}{\phi}{\{f_1\}}})$ \\
= \{6a\} \\
\hspace*{0.3cm} $(\expr{\prove{\app{\var{f_2}}{\var{es}}}{(\Box \varphi)}{\phi}{\{f_1\}}}) \wedge_3 (\expr{\prove{\app{\var{f_3}}{\var{es}}}{(\Box \varphi)}{\phi}{\{f_1\}}})$ \\
= \{calculation, 6a, 7b, 5a, 5d\} \\
\hspace*{0.3cm} $(\expr{\prove{\app{\var{f_2}}{\var{es}}}{(\Box \varphi)}{\phi}{\{f_1,f_2\}}}) \wedge_3 (\expr{\prove{\app{\var{f_4}}{\var{es}}}{(\Box \varphi)}{\phi}{\{f_1,f_2\}}}) \wedge_3 (\expr{\prove{\app{\var{f_6}}{\var{es}}}{(\Box \varphi)}{\phi}{\{f_1,f_2\}}})$ \\
\hspace*{0.3cm} $\wedge_3 (\expr{\prove{\app{\var{f_3}}{\var{es}}}{(\Box \varphi)}{\phi}{\{f_1\}}})$ \\
= \{6a\} \\
\hspace*{0.3cm} $(\expr{\prove{\app{\var{f_4}}{\var{es}}}{(\Box \varphi)}{\phi}{\{f_1,f_2\}}}) \wedge_3 (\expr{\prove{\app{\var{f_6}}{\var{es}}}{(\Box \varphi)}{\phi}{\{f_1,f_2\}}}) \wedge_3 (\expr{\prove{\app{\var{f_3}}{\var{es}}}{(\Box \varphi)}{\phi}{\{f_1\}}})$ \\
= \{calculation, 6a, 7b, 5a, 5d\} \\
\hspace*{0.3cm} $(\expr{\prove{\app{\var{f_1}}{\var{es}}}{(\Box \varphi)}{\phi}{\{f_1,f_2,f_4\}}}) \wedge_3 (\expr{\prove{\app{\var{f_4}}{\var{es}}}{(\Box \varphi)}{\phi}{\{f_1,f_2,f_4\}}}) \wedge_3 (\expr{\prove{\app{\var{f_8}}{\var{es}}}{(\Box \varphi)}{\phi}{\{f_1,f_2,f_4\}}})$ \\
\hspace*{0.3cm} $\wedge_3 (\expr{\prove{\app{\var{f_6}}{\var{es}}}{(\Box \varphi)}{\phi}{\{f_1,f_2\}}}) \wedge_3 (\expr{\prove{\app{\var{f_3}}{\var{es}}}{(\Box \varphi)}{\phi}{\{f_1\}}})$ \\
= \{6a\} \\
\hspace*{0.3cm} $(\expr{\prove{\app{\var{f_8}}{\var{es}}}{(\Box \varphi)}{\phi}{\{f_1,f_2,f_4\}}}) \wedge_3 (\expr{\prove{\app{\var{f_6}}{\var{es}}}{(\Box \varphi)}{\phi}{\{f_1,f_2\}}}) \wedge_3 (\expr{\prove{\app{\var{f_3}}{\var{es}}}{(\Box \varphi)}{\phi}{\{f_1\}}})$ \\
= \{calculation, 6a, 7b, 5a, 5d\} \\
\hspace*{0.3cm} $(\expr{\prove{\app{\var{f_3}}{\var{es}}}{(\Box \varphi)}{\phi}{\{f_1,f_2,f_4,f_8\}}}) \wedge_3 (\expr{\prove{\app{\var{f_8}}{\var{es}}}{(\Box \varphi)}{\phi}{\{f_1,f_2,f_4,f_8\}}}) \wedge_3 (\expr{\prove{\app{\var{f_6}}{\var{es}}}{(\Box \varphi)}{\phi}{\{f_1,f_2\}}})$ \\
\hspace*{0.3cm} $\wedge_3 (\expr{\prove{\app{\var{f_3}}{\var{es}}}{(\Box \varphi)}{\phi}{\{f_1\}}})$ \\
= \{6a\} \\
\hspace*{0.3cm} $(\expr{\prove{\app{\var{f_3}}{\var{es}}}{(\Box \varphi)}{\phi}{\{f_1,f_2,f_4,f_8\}}}) \wedge_3 (\expr{\prove{\app{\var{f_6}}{\var{es}}}{(\Box \varphi)}{\phi}{\{f_1,f_2\}}})\wedge_3 (\expr{\prove{\app{\var{f_3}}{\var{es}}}{(\Box \varphi)}{\phi}{\{f_1\}}})$ \\
= \{calculation, 6a, 7b, 5a, 5d\} \\
\hspace*{0.3cm} $(\expr{\prove{\app{\var{f_3}}{\var{es}}}{(\Box \varphi)}{\phi}{\{f_1,f_2,f_3,f_4,f_8\}}}) \wedge_3 (\expr{\prove{\app{\var{f_5}}{\var{es}}}{(\Box \varphi)}{\phi}{\{f_1,f_2,f_3,f_4,f_8\}}})$ \\
\hspace*{0.3cm} $\wedge_3 (\expr{\prove{\app{\var{f_7}}{\var{es}}}{(\Box \varphi)}{\phi}{\{f_1,f_2,f_3,f_4,f_8\}}}) \wedge_3 (\expr{\prove{\app{\var{f_6}}{\var{es}}}{(\Box \varphi)}{\phi}{\{f_1,f_2\}}}) \wedge_3 (\expr{\prove{\app{\var{f_3}}{\var{es}}}{(\Box \varphi)}{\phi}{\{f_1\}}})$ \\
= \{6a\} \\
\hspace*{0.3cm} $(\expr{\prove{\app{\var{f_5}}{\var{es}}}{(\Box \varphi)}{\phi}{\{f_1,f_2,f_3,f_4,f_8\}}}) \wedge_3 (\expr{\prove{\app{\var{f_7}}{\var{es}}}{(\Box \varphi)}{\phi}{\{f_1,f_2,f_3,f_4,f_8\}}})$ \\
\hspace*{0.3cm} $\wedge_3 (\expr{\prove{\app{\var{f_6}}{\var{es}}}{(\Box \varphi)}{\phi}{\{f_1,f_2\}}}) \wedge_3 (\expr{\prove{\app{\var{f_3}}{\var{es}}}{(\Box \varphi)}{\phi}{\{f_1\}}})$ \\
= \{calculation, 6a, 7b, 5a, 5d\} \\
\hspace*{0.3cm} $(\expr{\prove{\app{\var{f_1}}{\var{es}}}{(\Box \varphi)}{\phi}{\{f_1,f_2,f_3,f_4,f_5,f_8\}}}) \wedge_3 (\expr{\prove{\app{\var{f_5}}{\var{es}}}{(\Box \varphi)}{\phi}{\{f_1,f_2,f_3,f_4,f_5,f_8\}}})$ \\
\hspace*{0.3cm} $\wedge_3 (\expr{\prove{\app{\var{f_9}}{\var{es}}}{(\Box \varphi)}{\phi}{\{f_1,f_2,f_3,f_4,f_5,f_8\}}}) \wedge_3 (\expr{\prove{\app{\var{f_7}}{\var{es}}}{(\Box \varphi)}{\phi}{\{f_1,f_2,f_3,f_4,f_8\}}})$ \\
\hspace*{0.3cm} $\wedge_3 (\expr{\prove{\app{\var{f_6}}{\var{es}}}{(\Box \varphi)}{\phi}{\{f_1,f_2\}}}) \wedge_3 (\expr{\prove{\app{\var{f_3}}{\var{es}}}{(\Box \varphi)}{\phi}{\{f_1\}}})$ \\
= \{6a\} \\
\hspace*{0.3cm} $(\expr{\prove{\app{\var{f_9}}{\var{es}}}{(\Box \varphi)}{\phi}{\{f_1,f_2,f_3,f_4,f_5,f_8\}}}) \wedge_3 (\expr{\prove{\app{\var{f_7}}{\var{es}}}{(\Box \varphi)}{\phi}{\{f_1,f_2,f_3,f_4,f_8\}}})$ \\
\hspace*{0.3cm} $\wedge_3 (\expr{\prove{\app{\var{f_6}}{\var{es}}}{(\Box \varphi)}{\phi}{\{f_1,f_2\}}}) \wedge_3 (\expr{\prove{\app{\var{f_3}}{\var{es}}}{(\Box \varphi)}{\phi}{\{f_1\}}})$ \\
= \{calculation, 6a, 7b, 5a, 5d\} \\
\hspace*{0.3cm} $(\expr{\prove{\app{\var{f_2}}{\var{es}}}{(\Box \varphi)}{\phi}{\{f_1,f_2,f_3,f_4,f_5,f_8,f_9\}}}) \wedge_3 (\expr{\prove{\app{\var{f_9}}{\var{es}}}{(\Box \varphi)}{\phi}{\{f_1,f_2,f_3,f_4,f_5,f_8,f_9\}}})$ \\
\hspace*{0.3cm} $\wedge_3 (\expr{\prove{\app{\var{f_7}}{\var{es}}}{(\Box \varphi)}{\phi}{\{f_1,f_2,f_3,f_4,f_8\}}}) \wedge_3 (\expr{\prove{\app{\var{f_6}}{\var{es}}}{(\Box \varphi)}{\phi}{\{f_1,f_2\}}}) \wedge_3 (\expr{\prove{\app{\var{f_3}}{\var{es}}}{(\Box \varphi)}{\phi}{\{f_1\}}})$ \\
= \{6a\} \\
\hspace*{0.3cm} $(\expr{\prove{\app{\var{f_7}}{\var{es}}}{(\Box \varphi)}{\phi}{\{f_1,f_2,f_3,f_4,f_8\}}}) \wedge_3 (\expr{\prove{\app{\var{f_6}}{\var{es}}}{(\Box \varphi)}{\phi}{\{f_1,f_2\}}}) \wedge_3 (\expr{\prove{\app{\var{f_3}}{\var{es}}}{(\Box \varphi)}{\phi}{\{f_1\}}})$ \\
= \{calculation, 6a, 7b, 5a, 5d\} \\
\hspace*{0.3cm} $(\expr{\prove{\app{\var{f_7}}{\var{es}}}{(\Box \varphi)}{\phi}{\{f_1,f_2,f_3,f_4,f_7,f_8\}}}) \wedge_3 (\expr{\prove{\app{\var{f_9}}{\var{es}}}{(\Box \varphi)}{\phi}{\{f_1,f_2,f_3,f_4,f_7,f_8\}}})$ \\
\hspace*{0.3cm} $\wedge_3 (\expr{\prove{\app{\var{f_6}}{\var{es}}}{(\Box \varphi)}{\phi}{\{f_1,f_2\}}}) \wedge_3 (\expr{\prove{\app{\var{f_3}}{\var{es}}}{(\Box \varphi)}{\phi}{\{f_1\}}})$ \\
= \{6a\} \\
\hspace*{0.3cm} $(\expr{\prove{\app{\var{f_9}}{\var{es}}}{(\Box \varphi)}{\phi}{\{f_1,f_2,f_3,f_4,f_7,f_8\}}}) \wedge_3 (\expr{\prove{\app{\var{f_6}}{\var{es}}}{(\Box \varphi)}{\phi}{\{f_1,f_2\}}}) $ \\
\hspace*{0.3cm} $\wedge_3 (\expr{\prove{\app{\var{f_3}}{\var{es}}}{(\Box \varphi)}{\phi}{\{f_1\}}})$ \\
\end{example}
} 
\section{Construction of Counterexamples and Witnesses}

In this section, we show how counterexamples and witnesses for temporal properties of reactive systems defined in our functional 
language can be constructed. We augment the verification rules from the previous section to generate a {\em verdict} which consists 
of a trace (a list of observable states) along with a truth value and belongs to the following datatype:
$$Verdict ::= TruthVal \times List~State$$
The trace will give a counterexample if the associated truth value is {\em False}, and a witness if the corresponding truth value is {\em True}. 
The logical connectives $\wedge_v, \vee_v, \Rightarrow_v$ and $\neg_v$ are extended to this datatype as $\wedge_v, \vee_v, 
\Rightarrow_v$ and $\neg_v$, which are defined as follows.
\begin{center}
\begin{tabular}{lcl}
$(b_1,t_1) \wedge_v (b_2,t_2)$ & = & $(b,t)$ \\
& & where \\
& & $b = b_1 \wedge_3 b_2$ \\
& & $t = min \{t_i | t_i \in \{t_1,t_2\} \wedge b_i = b\}$
\end{tabular}
\end{center}
\begin{center}
\begin{tabular}{lcl}
$(b_1,t_1) \vee_v (b_2,t_2)$ & = & $(b,t)$ \\
& & where \\
& & $b = b_1 \vee_3 b_2$ \\
& & $t = min \{t_i | t_i \in \{t_1,t_2\} \wedge b_i = b\}$
\end{tabular}
\end{center}
\begin{center}
\begin{tabular}{lcl}
$(b_1,t_1) \Rightarrow_v (b_2,t_2)$ & = & $(\neg_v (b_1,t_1)) \vee_v (b_2,t_2)$
\end{tabular}
\end{center}
\begin{center}
\begin{tabular}{lcl}
$\neg_v (b,t)$ & = & $(\neg_3 b,t)$
\end{tabular}
\end{center}
If there is more than one counterexample or witness, the function {\em min} is used to ensure that the {\em shortest} one is always returned.
The rules for the construction of counterexamples and witnesses for the simplified form of program defined in Figure \ref{simplified} are as shown in Figure \ref{counterexample}.
\begin{center}
\begin{figure}[htb]
\begin{center}
\begin{tabular}[t]{@{\hspace*{0mm}}l@{\hspace*{1mm}}l@{\hspace*{1mm}}c@{\hspace*{1mm}}l@{\hspace*{0mm}}}
(1) & $\expr{\generate{\var{e}}{(\varphi \wedge \psi)}{\phi}{\rho}{\pi}}$ &  = & $(\expr{\generate{\var{e}}{\varphi}{\phi}{\rho}{\pi}}) \wedge_v (\expr{\generate{\var{e}}{\psi}{\phi}{\rho}{\pi}})$ \\
(2) & $\expr{\generate{\var{e}}{(\varphi \vee \psi)}{\phi}{\rho}{\pi}}$ &  = & $(\expr{\generate{\var{e}}{\varphi}{\phi}{\rho}{\pi}}) \vee_v (\expr{\generate{\var{e}}{\psi}{\phi}{\rho}{\pi}})$ \\
(3) & $\expr{\generate{\var{e}}{(\varphi \Rightarrow \psi)}{\phi}{\rho}{\pi}}$ &  = & $(\expr{\generate{\var{e}}{\varphi}{\phi}{\rho}{\pi}}) \Rightarrow_v (\expr{\generate{\var{e}}{\psi}{\phi}{\rho}{\pi}})$ \\
(4) & $\expr{\generate{\var{e}}{(\neg \varphi)}{\phi}{\rho}{\pi}}$ &  = & $\neg_v (\expr{\generate{\var{e}}{\varphi}{\phi}{\rho}{\pi}})$ \\
(5a) & $\expr{\generate{\app{\app{\var{Cons}}{\var{e_0}}}{\var{e_1}}}{(\Box \varphi)}{\phi}{\rho}{\pi}}$ &  = & $(\expr{\generate{\app{\app{\var{Cons}}{\var{e_0}}}{\var{e_1}}}{\varphi}{\phi}{\emptyset}{\pi}}) \wedge_v 
(\expr{\generate{\var{e_1}}{(\Box \varphi)}{\phi}{\rho}{\var{(\concat{\pi}{[e_0]})}}})$ \\
(5b) & $\expr{\generate{\app{\app{\var{Cons}}{\var{e_0}}}{\var{e_1}}}{(\Diamond \varphi)}{\phi}{\rho}{\pi}}$ &  = & $(\expr{\generate{\app{\app{\var{Cons}}{\var{e_0}}}{\var{e_1}}}{\varphi}{\phi}{\emptyset}{\pi}}) \vee_v
(\expr{\generate{\var{e_1}}{(\Diamond \varphi)}{\phi}{\rho}{\var{(\concat{\pi}{[e_0]})}}})$ \\
(5c) & $\expr{\generate{\app{\app{\var{Cons}}{\var{e_0}}}{\var{e_1}}}{(\ocircle \varphi)}{\phi}{\rho}{\pi}}$ &  = & $\expr{\generate{\var{e_1}}{\varphi}{\phi}{\rho}{(\concat{\pi}{[e_0]})}}$ \\
(5d) & $\expr{\generate{\app{\app{\var{Cons}}{\var{e_0}}}{\var{e_1}}}{\varphi}{\phi}{\rho}{\pi}}$ &  = & $(v,\expr{\concat{\pi}{[e_0]}})$, where $\varphi[e_0/s] \Downarrow v$ \\
(6a) & $\expr{\generate{\app{\var{f}}{\args{x_1}{x_n}}}{(\Box \varphi)}{\phi}{\rho}{\pi}}$ & = & $\left\{\begin{tabular}[c]{@{\hspace*{0mm}}l@{\hspace*{1mm}}l@{\hspace*{0mm}}}
$(True,\pi)$, & if $f \in \rho$ \\
$\expr{\generate{\var{e}[x_1/x_1',\ldots,x_n/x_n']}{(\Box \varphi)}{\phi}{(\rho \cup \{f\})}{\pi}}$, & otherwise 
\end{tabular}\right.$ \\
& & & where $\phi(f) = \expr{\abs{\args{\var{x_1'}}{\var{x_n'}}}{\var{e}}}$ \\
(6b) & $\expr{\generate{\app{\var{f}}{\args{x_1}{x_n}}}{(\Diamond \varphi)}{\phi}{\rho}{\pi}}$ & = & $\left\{\begin{tabular}[c]{@{\hspace*{0mm}}l@{\hspace*{1mm}}l@{\hspace*{0mm}}}
$(False,\pi)$, & if $f \in \rho$ \\
$\expr{\generate{\var{e}[x_1/x_1',\ldots,x_n/x_n']}{(\Diamond \varphi)}{\phi}{(\rho \cup \{f\})}{\pi}}$, & otherwise 
\end{tabular}\right.$ \\
& & & where $\phi(f) = \expr{\abs{\args{\var{x_1'}}{\var{x_n'}}}{\var{e}}}$ \\
(6c) & $\expr{\generate{\app{\var{f}}{\args{x_1}{x_n}}}{\varphi}{\phi}{\rho}{\pi}}$ & = & 
$\left\{\begin{tabular}[c]{@{\hspace*{0mm}}l@{\hspace*{1mm}}l@{\hspace*{0mm}}}
$(Undefined,\pi)$, & if $f \in \rho$ \\
$\expr{\generate{\var{e}[x_1/x_1',\ldots,x_n/x_n']}{\varphi}{\phi}{(\rho \cup \{f\})}{\pi}}$, & otherwise 
\end{tabular}\right.$ \\
& & & where $\phi(f) = \expr{\abs{\args{\var{x_1'}}{\var{x_n'}}}{\var{e}}}$ \\
(7a) & \multicolumn{3}{@{\hspace*{0mm}}l@{\hspace*{0mm}}}{$\expr{\generate{\casedots{\var{x}}{\var{p_1}}{\var{e_1}}{\var{p_n}}{\var{e_n}}}{(\Diamond \varphi)}{\phi}{\rho}{\pi}}$} \\
& & = & $(\bigvee\limits_{p_i \in F} \expr{\generate{\var{e_i}}{(\Diamond \varphi)}{\phi}{\rho}{\pi}}) \vee_v (\bigwedge \limits_{i = 1}^{n}\expr{\generate{\var{e_i}}{(\Diamond \varphi)}{\phi}{\rho}{\pi}})$ \\
(7b) & \multicolumn{3}{@{\hspace*{0mm}}l@{\hspace*{0mm}}}{$\expr{\generate{\casedots{\var{x}}{\var{p_1}}{\var{e_1}}{\var{p_n}}{\var{e_n}}}{\varphi}{\phi}{\rho}{\pi}}$} \\
& & = & $\bigwedge\limits_{i=1}^{n} \expr{\generate{\var{e_i}}{\varphi}{\phi}{\rho}{\pi}}$ \\
(8) & $\expr{\generate{\app{\var{x}}{\args{\var{e_1}}{\var{e_n}}}}{\varphi}{\phi}{\rho}{\pi}}$ &  = & $(Undefined,\pi)$ \\
(9) & $\expr{\generate{\letexp{\var{x}}{\var{e_0}}{\var{e_1}}}{\varphi}{\phi}{\rho}{\pi}}$ & = & $\expr{\generate{\var{e_1}}{\varphi}{\phi}{\rho}{\pi}}$ \\
(10) & \multicolumn{3}{@{\hspace*{0mm}}l@{\hspace*{0mm}}}{$\expr{\generate{\Where{\var{e_0}}{\var{f_1}}{\var{e_1}}{\var{f_n}}{\var{e_n}}}{\varphi}{\phi}{\rho}{\pi}}$} \\
& & = & $\expr{\generate{\var{e_0}}{\varphi}{(\phi \cup \{f_1 \mapsto e_1,\ldots,f_n \mapsto e_n\})}{\rho}{\pi}}$
\end{tabular}
\end{center}
\caption{Counterexample and Witness Construction Rules}
\label{counterexample}
\end{figure} 
\end{center}
These rules are very similar to the verification rules given in Figure \ref{proofrules}, with the addition of the parameter $\pi$, which gives the
value of the current trace thus far. As each observable state in the program trace is processed in rules (5a-d),  it is appended to the end of 
$\pi$ and when a final truth value is obtained it is returned along with the value of $\pi$. Counterexamples and witnesses can of course be 
infinite in the form of a lasso consisting of a finite prefix and a loop, while only a finite trace will be returned using these rules. However,
loops can be detected in the generated trace as the repetition of observable states. To prove that the constructed counterexample or 
witness is valid, we need to prove that it satisfies the original temporal property which was verified.
\begin{theorem}[Validity]
\normalfont{$\forall e \in Prog, \varphi \in$ WFF: \\
$(\expr{\generate{e}{\varphi}{\emptyset}{\emptyset}{[]}} = (True,\pi) \Rightarrow \pi,0 \vDash \varphi) \wedge (\expr{\generate{e}{\varphi}{\emptyset}{\emptyset}{[]}} = (False,\pi) \Rightarrow \pi,0 \nvDash \varphi)$}
\end{theorem}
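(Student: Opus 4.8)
The plan is to establish the result by a single structural induction on $e$, after first generalising the statement so that the induction hypothesis survives the recursive calls of $\mathcal{C}$. Since $\mathcal{C}[\![e]\!]\,\varphi\,\phi\,\rho\,\pi$ always passes a non-empty environment $\phi$, a non-empty loop set $\rho$ and an extended accumulated trace to its subcomputations, I would prove the following generalised invariant: for all $\varphi \in$ WFF, all $\phi$, all $\rho$, and all prefixes $\pi_{in}$, if $\mathcal{C}[\![e]\!]\,\varphi\,\phi\,\rho\,\pi_{in} = (b,\pi)$ then $\pi_{in}$ is a prefix of $\pi$ and, writing $i = |\pi_{in}|$ for the current position, $b = \mathit{True}$ implies $\pi,i \vDash \varphi$ while $b = \mathit{False}$ implies $\pi,i \nvDash \varphi$ (satisfaction being read on the finite constructed trace). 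The theorem is the instance $\phi = \emptyset$, $\rho = \emptyset$, $\pi_{in} = []$, $i = 0$. Because the unfolding rules (6a--c) replace a call by a function body rather than a syntactic subterm, the induction is not literally on the size of $e$; instead I would induct on the recursion tree of the $\mathcal{C}$ evaluation, which is finite by the Termination theorem (whose argument transfers verbatim to $\mathcal{C}$), ordering the unfolding steps by the finite descent of the set of functions not yet in $\rho$.

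Before the case analysis I would record that $\mathcal{C}$ refines $\mathcal{P}$: the truth-value component of every $\mathcal{C}$ rule is exactly the corresponding $\mathcal{P}$ rule, since $\wedge_v, \vee_v, \Rightarrow_v, \neg_v$ carry as their first component precisely $\wedge_3, \vee_3, \Rightarrow_3, \neg_3$, and the base cases (5d), (6a--c), (8) return the same truth values. A trivial induction then gives that $\mathcal{C}[\![e]\!]\,\varphi\,\phi\,\rho\,\pi_{in} = (b,\pi)$ implies $\mathcal{P}[\![e]\!]\,\varphi\,\phi\,\rho = b$, which lets me reuse all of the \emph{truth-value} reasoning already discharged for the Soundness theorem and concentrate the new work on showing that the \emph{accumulated trace} realises the claimed (non-)satisfaction. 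The Cons cases carry the bulk of this: rule (5d) is the atomic base case, where $\varphi[e_0/s]\Downarrow v$ by the operational semantics and the appended state $e_0$ sits at position $i$, so $\pi,i\vDash\varphi$ iff $v = \mathit{True}$ immediately; rules (5a), (5b), (5c) are exactly the one-step unfoldings of the LTL clauses for $\Box$, $\Diamond$ and $\ocircle$, so that the recursion into the tail with $\pi_{in}\!+\!\!+[e_0]$ advances the position to $i+1$ and the induction hypothesis at $i+1$ combines with the first-state check at $i$ to yield the clause.

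The connective and case rules (1--4), (7a), (7b) are where the trace-selecting $\min$ of $\wedge_v$ and $\vee_v$ must be handled. In each such rule both operands are evaluated at the \emph{same} program point with the \emph{same} input prefix $\pi_{in}$, and the combinator returns the trace of an operand whose label equals the combined Kleene verdict; the induction hypothesis gives that this operand's trace decides its subformula at position $i$. The supporting fact I would isolate as a lemma is \emph{trace coherence}: any two verdicts produced for one program point over one $\pi_{in}$ carry traces that are nested prefixes of the single execution path determined by the program's Cons-structure and branch selections, so that the $\min$-selected trace is a prefix of the other and still decides the combined formula at $i$ (using that $\Diamond$ is preserved under extending a satisfying prefix, and that an atomic or $\Box$-violating fact is local to the prefix on which it was found). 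The function-loop rules (6a--c) are the fixed-point base cases: on re-encountering a call, $\Box\mapsto(\mathit{True},\pi)$ realises the greatest fixed point and $\Diamond\mapsto(\mathit{False},\pi)$ the least, and under the finite-trace reading the returned $\pi$ indeed (dis)satisfies at $i$, every state appended along the way having passed the first-state check of (5a)/(5b).

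I expect the fixed-point and fairness cases to be the main obstacle. For $\Box$-true-on-loop and $\Diamond$-false-on-loop one must argue that truncating the constructed trace at the detected loop is sound with respect to the genuine infinite (lasso) behaviour; this is precisely the greatest/least fixed-point justification already invoked for rules (6a)/(6b) in the verification section, which I would transport to the trace by noting that the repeated observable state closes a cycle on which the relevant subformula is invariantly (dis)satisfied. The genuinely new difficulty is the fair-eventuality rule (7a): here I must show that the disjunction over the fair branches yields a finite prefix that is a prophetic witness of eventual satisfaction on \emph{every} fair infinite run, i.e.\ every run in which each event of $F$ occurs infinitely often. I would discharge this by relating the fairness disjunct to the assumption that fair events are eventually taken, exactly as the explanation of rule (7a) does, and then checking that $\min$ still returns a prefix on which $\pi,i\vDash\Diamond\varphi$ holds; reconciling this finite-prefix witness with the infinitary fairness hypothesis is the step I expect to require the most care.
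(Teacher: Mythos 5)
The paper's own ``proof'' of this theorem is a single sentence --- ``by structural induction on the program $e$'' --- so at the level of detail the paper provides, your proposal follows the same skeleton, and in two respects it is more careful than the original: you observe, correctly, that literal structural induction on $e$ cannot work because rules (6a--c) replace a call $f~x_1 \ldots x_n$ by a function body that is not a subterm, and you repair this by inducting on the (finite, by the Termination theorem) recursion tree of $\mathcal{C}$, ordered by the shrinking set of functions outside $\rho$; and you generalise the statement over $\phi$, $\rho$ and the accumulated prefix $\pi_{in}$ so that the hypothesis survives the recursive calls. Your factoring through the observation that the truth-value component of $\mathcal{C}$ is exactly $\mathcal{P}$, so that Soundness can be reused and only the trace component needs new work, is also sound and is a sensible decomposition the paper does not make explicit.

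There is, however, a genuine gap, and it sits exactly where you placed your ``trace coherence'' lemma rather than where you expected the main difficulty (rule (7a) by itself). The lemma as you state it --- that any two verdicts produced at one program point over one $\pi_{in}$ carry traces that are nested prefixes of a single execution path --- is false: the two operand computations in rule (1) can make \emph{different branch selections} inside rule (7a), because the disjunction over fair branches, resolved by $min$, may pick different branches for different subformulae. Concretely, take a state $s_0$ from which one fair event leads to a state satisfying $p$ and another fair event leads to a state satisfying $q$, and verify $\Diamond p \wedge \Diamond q$: the computation for $\Diamond p$ returns a trace through the first branch, the one for $\Diamond q$ a trace through the second, these share only $\pi_{in}$, and $\wedge_v$ then returns $(True, t)$ where $t$ is the shorter trace --- which contains no $q$-state and so does not satisfy $\Diamond q$, let alone the conjunction, at position $i$ under any finite-trace or prefix-extension reading. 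Your appeal to monotonicity of $\Diamond$ under extension cannot rescue this, since the discarded trace is not an extension of the kept one. So the inductive step for rules (1)--(4) fails for the trace half of your invariant (the truth-value half survives via the refinement to $\mathcal{P}$). To close the gap you would need either to weaken the claimed invariant for compound formulae (asserting validity of the trace only for the temporal subformula it was generated by), or to replace $min$-selection by a combinator that produces a common trace deciding both operands --- which the rules of Figure 7 as given do not compute. It is worth noting that this is a problem for the theorem as stated, not only for your proof, and the paper's one-line induction claim glosses over precisely the same point.
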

\begin{proof}
The proof of this is by structural induction on the program $e$.
\end{proof} \\
Using these rules, we try to construct counterexamples for the two properties (mutual exclusion and non-starvation) for the 
example programs given in Section 3. 
\setcounter{example}{0}
\begin{example}
\normalfont{For the program shown in Figure \ref{example1}, the application of these rules for Property 1 (mutual exclusion) is shown 
below where we represent Property 1 by $\Box \varphi$ and the function environment by $\phi$. We also use the shorthand notation
$(X,Y)$ to denote the state {\em ObsState} $X$ $Y$.} \\
\\
\hspace*{0.3cm} $\expr{\generate{\app{\app{\var{Cons}}{\var{(T,T)}}}{\brackets{\app{\var{f_1}}{\var{es}}}}}{(\Box \varphi)}{\emptyset}{\emptyset}{[]}}$ \\
= \{5a\} \\
\hspace*{0.3cm} $(\expr{\generate{\app{\app{\var{Cons}}{\var{(T,T)}}}{\brackets{\app{\var{f_1}}{\var{es}}}}}{\varphi}{\emptyset}{\emptyset}{[]}}) \wedge_v (\expr{\generate{\app{\var{f_1}}{\var{es}}}{(\Box \varphi)}{\emptyset}{\emptyset}{[(T,T)]}})$ \\
= \{5d\} \\
\hspace*{0.3cm} $(\varphi[(T,T)/s]) \wedge_v (\expr{\generate{\app{\var{f_1}}{\var{es}}}{(\Box \varphi)}{\emptyset}{\emptyset}{[(T,T)]}})$ \\
= \{calculation, 6a, 7b, 5a, 5d\} \\
\hspace*{0.3cm} $(\expr{\generate{\app{\var{f_1}}{\var{es}}}{(\Box \varphi)}{\phi}{\{f_1\}}{[(T,T),(T,T)]}}) \wedge_v (\expr{\generate{\app{\var{f_2}}{\var{es}}}{(\Box \varphi)}{\phi}{\{f_1\}}{[(T,T),(W,T)]}})$ \\
\hspace*{0.3cm} $\wedge_v (\expr{\generate{\app{\var{f_3}}{\var{es}}}{(\Box \varphi)}{\phi}{\{f_1\}}{[(T,T),(T,W)]}})$ \\
= \{6a\} \\
\hspace*{0.3cm} $(\expr{\generate{\app{\var{f_2}}{\var{es}}}{(\Box \varphi)}{\phi}{\{f_1\}}{[(T,T),(W,T)]}}) \wedge_v (\expr{\generate{\app{\var{f_3}}{\var{es}}}{(\Box \varphi)}{\phi}{\{f_1\}}{[(T,T),(T,W)]}})$ \\
= \{calculation, 6a, 7b, 5a, 5d\} \\
\hspace*{0.3cm} $(\expr{\generate{\app{\var{f_2}}{\var{es}}}{(\Box \varphi)}{\phi}{\{f_1,f_2\}}{[(T,T),(W,T),(W,T)]}})$ \\
\hspace*{0.3cm} $\wedge_v (\expr{\generate{\app{\var{f_4}}{\var{es}}}{(\Box \varphi)}{\phi}{\{f_1,f_2\}}{[(T,T),(W,T),(U,T)]}})$ \\
\hspace*{0.3cm} $\wedge_v (\expr{\generate{\app{\var{f_5}}{\var{es}}}{(\Box \varphi)}{\phi}{\{f_1,f_2\}}{[(T,T),(W,T),(W,W)]}})$ \\
\hspace*{0.3cm} $\wedge_v (\expr{\generate{\app{\var{f_3}}{\var{es}}}{(\Box \varphi)}{\phi}{\{f_1\}}{[(T,T),(T,W)]}})$ \\
= \{6a\} \\
\hspace*{0.3cm} $(\expr{\generate{\app{\var{f_4}}{\var{es}}}{(\Box \varphi)}{\phi}{\{f_1,f_2\}}{[(T,T),(W,T),(U,T)]}})$ \\
\hspace*{0.3cm} $\wedge_v (\expr{\generate{\app{\var{f_5}}{\var{es}}}{(\Box \varphi)}{\phi}{\{f_1,f_2\}}{[(T,T),(W,T),(W,W)]}})$ \\
\hspace*{0.3cm} $\wedge_v (\expr{\generate{\app{\var{f_3}}{\var{es}}}{(\Box \varphi)}{\phi}{\{f_1\}}{[(T,T),(T,W)]}})$ \\
= \{calculation, 6a, 7b, 5a, 5d\} \\
\hspace*{0.3cm} $(\expr{\generate{\app{\var{f_1}}{\var{es}}}{(\Box \varphi)}{\phi}{\{f_1,f_2,f_4\}}{[(T,T),(W,T),(U,T),(T,T)]}})$ \\
\hspace*{0.3cm} $\wedge_v (\expr{\generate{\app{\var{f_4}}{\var{es}}}{(\Box \varphi)}{\phi}{\{f_1,f_2,f_4\}}{[(T,T),(W,T),(U,T),(U,T)]}})$ \\
\hspace*{0.3cm} $\wedge_v (\expr{\generate{\app{\var{f_5}}{\var{es}}}{(\Box \varphi)}{\phi}{\{f_1,f_2\}}{[(T,T),(W,T),(W,W)]}})$ \\
\hspace*{0.3cm} $\wedge_v (\expr{\generate{\app{\var{f_3}}{\var{es}}}{(\Box \varphi)}{\phi}{\{f_1\}}{[(T,T),(T,W)]}})$ \\
= \{6a\} \\
\hspace*{0.3cm} $(\expr{\generate{\app{\var{f_5}}{\var{es}}}{(\Box \varphi)}{\phi}{\{f_1,f_2\}}{[(T,T),(W,T),(W,W)]}})$ \\
\hspace*{0.3cm} $\wedge_v (\expr{\generate{\app{\var{f_3}}{\var{es}}}{(\Box \varphi)}{\phi}{\{f_1\}}{[(T,T),(T,W)]}})$ \\
= \{calculation, 6a, 7b, 5a, 5d\} \\
\hspace*{0.3cm} $(\expr{\generate{\app{\var{f_5}}{\var{es}}}{(\Box \varphi)}{\phi}{\{f_1,f_2,f_5\}}{[(T,T),(W,T),(W,W),(W,W)]}})$ \\
\hspace*{0.3cm} $\wedge_v (\expr{\generate{\app{\var{f_7}}{\var{es}}}{(\Box \varphi)}{\phi}{\{f_1,f_2,f_5\}}{[(T,T),(W,T),(W,W),(U,W)]}})$ \\
\hspace*{0.3cm} $\wedge_v (\expr{\generate{\app{\var{f_8}}{\var{es}}}{(\Box \varphi)}{\phi}{\{f_1,f_2,f_5\}}{[(T,T),(W,T),(W,W),(W,U)]}})$ \\
\hspace*{0.3cm} $\wedge_v (\expr{\generate{\app{\var{f_3}}{\var{es}}}{(\Box \varphi)}{\phi}{\{f_1\}}{[(T,T),(T,W)]}})$ \\
= \{6a\} \\
\hspace*{0.3cm} $(\expr{\generate{\app{\var{f_7}}{\var{es}}}{(\Box \varphi)}{\phi}{\{f_1,f_2,f_5\}}{[(T,T),(W,T),(W,W),(U,W)]}})$ \\
\hspace*{0.3cm} $\wedge_v (\expr{\generate{\app{\var{f_8}}{\var{es}}}{(\Box \varphi)}{\phi}{\{f_1,f_2,f_5\}}{[(T,T),(W,T),(W,W),(W,U)]}})$ \\
\hspace*{0.3cm} $\wedge_v (\expr{\generate{\app{\var{f_3}}{\var{es}}}{(\Box \varphi)}{\phi}{\{f_1\}}{[(T,T),(T,W)]}})$ \\
= \{calculation, 6a, 7b, 5a, 5d\} \\
\hspace*{0.3cm} ({\em False},$[(T,T),(W,T),(W,W),(U,W),(U,U)]$) \\
\\
We can see that the rules that are applied closely mirror those applied in the verification of this property, and that the following
counterexample is generated: \\
\\
$Cons~(ObsState~T~T)~(Cons~(ObsState~W~T)~(Cons~(ObsState~W~W)~(Cons~(ObsState~U~W)$ \\
$(Cons~(ObsState~U~U)~Nil))))$
\end{example}
\begin{example}
\normalfont{For the program shown in Figure \ref{example2}, the application of these rules for Property 2 (non-starvation) is shown below 
where we represent Property 2 by $\Box (\varphi \Rightarrow \Diamond \psi)$ and the function environment by $\phi$. We again use the 
shorthand notation $(X,Y)$ to denote the state {\em ObsState} $X$ $Y$.} \\
\\
\hspace*{0.3cm} $\expr{\generate{\app{\app{\var{Cons}}{\var{(T,T)}}}{\brackets{\app{\var{f_1}}{\var{es}}}}}{(\Box (\varphi \Rightarrow \Diamond \psi))}{\emptyset}{\emptyset}{[]}}$ \\
= \{5a\} \\
\hspace*{0.3cm} $(\expr{\generate{\app{\app{\var{Cons}}{\var{(T,T)}}}{\brackets{\app{\var{f_1}}{\var{es}}}}}{(\varphi \Rightarrow \Diamond \psi)}{\emptyset}{\emptyset}{[]}}) \wedge_v (\expr{\generate{\app{\var{f_1}}{\var{es}}}{(\Box (\varphi \Rightarrow \Diamond \psi))}{\emptyset}{\emptyset}{[(T,T)]}})$ \\
= \{5d\} \\
\hspace*{0.3cm} $((\varphi \Rightarrow \Diamond \psi)[(T,T)/s]) \wedge_v (\expr{\generate{\app{\var{f_1}}{\var{es}}}{(\Box (\varphi \Rightarrow \Diamond \psi))}{\emptyset}{\emptyset}{[(T,T)]}})$ \\
= \{calculation, 3, 6a, 7b, 5a, 5d\} \\
\hspace*{0.3cm} $(\expr{\generate{\app{\var{f_1}}{\var{es}}}{(\Box (\varphi \Rightarrow \Diamond \psi))}{\phi}{\{f_1\}}{[(T,T),(T,T)]}}) \wedge_v (\expr{\generate{\app{\var{f_2}}{\var{es}}}{(\Box (\varphi \Rightarrow \Diamond \psi))}{\phi}{\{f_1\}}{[(T,T),(W,T)]}})$ \\
\hspace*{0.3cm} $\wedge_v (\expr{\generate{\app{\var{f_3}}{\var{es}}}{(\Box (\varphi \Rightarrow \Diamond \psi))}{\phi}{\{f_1\}}{\pi_4}})$ \\
= \{6a\} \\
\hspace*{0.3cm} $(\expr{\generate{\app{\var{f_2}}{\var{es}}}{(\Box (\varphi \Rightarrow \Diamond \psi))}{\phi}{\{f_1\}}{[(T,T),(W,T)]}}) \wedge_v (\expr{\generate{\app{\var{f_3}}{\var{es}}}{(\Box (\varphi \Rightarrow \Diamond \psi))}{\phi}{\{f_1\}}{[(T,T),(T,W)]}})$ \\
= \{calculation, 3, 6a, 7b, 5a, 5d\} \\
\hspace*{0.3cm} $(\expr{\generate{\app{\var{f_2}}{\var{es}}}{(\Box (\varphi \Rightarrow \Diamond \psi))}{\phi}{\{f_1,f_2\}}{[(T,T),(W,T),(W,T)]}})$ \\
\hspace*{0.3cm} $\wedge_v (\expr{\generate{\app{\var{f_4}}{\var{es}}}{(\Box (\varphi \Rightarrow \Diamond \psi))}{\phi}{\{f_1,f_2\}}{[(T,T),(W,T),(U,T)]}})$ \\
\hspace*{0.3cm} $\wedge_v (\expr{\generate{\app{\var{f_5}}{\var{es}}}{(\Box (\varphi \Rightarrow \Diamond \psi))}{\phi}{\{f_1,f_2\}}{[(T,T),(W,T),(W,W)]}})$ \\
\hspace*{0.3cm} $\wedge_v (\expr{\generate{\app{\var{f_3}}{\var{es}}}{(\Box (\varphi \Rightarrow \Diamond \psi))}{\phi}{\{f_1\}}{[(T,T),(T,W)]}})$ \\
= \{6a\} \\
\hspace*{0.3cm} $(\expr{\generate{\app{\var{f_4}}{\var{es}}}{(\Box (\varphi \Rightarrow \Diamond \psi))}{\phi}{\{f_1,f_2\}}{[(T,T),(W,T),(U,T)]}})$ \\
\hspace*{0.3cm} $\wedge_v (\expr{\generate{\app{\var{f_5}}{\var{es}}}{(\Box (\varphi \Rightarrow \Diamond \psi))}{\phi}{\{f_1,f_2\}}{[(T,T),(W,T),(W,W)]}})$ \\
\hspace*{0.3cm} $\wedge_v (\expr{\generate{\app{\var{f_3}}{\var{es}}}{(\Box (\varphi \Rightarrow \Diamond \psi))}{\phi}{\{f_1\}}{[(T,T),(T,W)]}})$ \\
= \{calculation, 3, 6a, 7b, 5a, 5d\} \\
\hspace*{0.3cm} $(\expr{\generate{\app{\var{f_1}}{\var{es}}}{(\Box (\varphi \Rightarrow \Diamond \psi))}{\phi}{\{f_1,f_2,f_4\}}{[(T,T),(W,T),(U,T),(T,T)]}})$ \\
\hspace*{0.3cm} $\wedge_v (\expr{\generate{\app{\var{f_4}}{\var{es}}}{(\Box (\varphi \Rightarrow \Diamond \psi))}{\phi}{\{f_1,f_2,f_4\}}{[(T,T),(W,T),(U,T),(U,T)]}})$ \\
\hspace*{0.3cm} $\wedge_v (\expr{\generate{\app{\var{f_5}}{\var{es}}}{(\Box (\varphi \Rightarrow \Diamond \psi))}{\phi}{\{f_1,f_2\}}{[(T,T),(W,T),(W,W)]}})$ \\
\hspace*{0.3cm} $\wedge_v (\expr{\generate{\app{\var{f_3}}{\var{es}}}{(\Box (\varphi \Rightarrow \Diamond \psi))}{\phi}{\{f_1\}}{[(T,T),(T,W)]}})$ \\
= \{6a\} \\
\hspace*{0.3cm} $(\expr{\generate{\app{\var{f_5}}{\var{es}}}{(\Box (\varphi \Rightarrow \Diamond \psi))}{\phi}{\{f_1,f_2\}}{[(T,T),(W,T),(W,W)]}})$ \\
\hspace*{0.3cm} $\wedge_v (\expr{\generate{\app{\var{f_3}}{\var{es}}}{(\Box (\varphi \Rightarrow \Diamond \psi))}{\phi}{\{f_1\}}{[(T,T),(T,W)]}})$ \\
\ignore{
= \{6a, 7b\} \\
\hspace*{0.3cm} $(\expr{\generate{\app{\app{\var{Cons}}{\var{(ObsState~W~W)}}}{\brackets{\app{\var{f_5}}{\var{es}}}}}{(\Box (\varphi \Rightarrow \Diamond \psi))}{\emptyset}{\{f_1,f_2,f_5\}}{\pi_{7}}})$ \\
\hspace*{0.3cm} $\wedge_v (\expr{\generate{\app{\var{f_3}}{\var{es}}}{(\Box (\varphi \Rightarrow \Diamond \psi))}{\phi}{\{f_1\}}{[(T,T),(T,W)]}})$ \\
= \{5a\} \\
\hspace*{0.3cm} $(\expr{\generate{\app{\app{\var{Cons}}{\var{(ObsState~W~W)}}}{\brackets{\app{\var{f_5}}{\var{es}}}}}{(\varphi \Rightarrow \Diamond \psi)}{\emptyset}{\{f_1,f_2,f_5\}}{\pi_7}})$ \\
\hspace*{0.3cm} $\wedge_v (\expr{\generate{\app{\var{f_5}}{\var{es}}}{(\Box (\varphi \Rightarrow \Diamond \psi))}{\emptyset}{\{f_1,f_2,f_5\}}{\pi_{10}}}) \wedge_v (\expr{\generate{\app{\var{f_3}}{\var{es}}}{(\Box (\varphi \Rightarrow \Diamond \psi))}{\phi}{\{f_1\}}{[(T,T),(T,W)]}})$ \\
}
= \{calculation, 6a, 7b, 5a, 3, 5b, 6b\} \\
\hspace*{0.3cm} ({\em False},$[(T,T),(W,T),(W,W),(W,W)]$) \\
\\
%Again, we can see that the rules that are applied closely mirror those applied in the verification of this property.
The following counterexample with a loop at the end is therefore generated: \\
\\
$Cons~(ObsState~T~T)~(Cons~(ObsState~W~T)~(Cons~(ObsState~W~W)~(Cons~(ObsState~W~W)~Nil)))$
\end{example}
\section{Conclusion and Related Work}

In previous work \cite{HAMILTON15}, we have shown how a fold/unfold program transformation technique can be used to verify 
both safety and liveness properties of reactive systems which have been specified using a functional language. However, counterexamples 
and witnesses were not constructed using this approach. In this paper, we have therefore extended these previous techniques to address this 
shortcoming to construct a counterexample trace when a temporal property does not hold, and a witness when it does.

Fold/unfold transformation techniques have also been developed for verifying temporal properties for logic programs 
\cite{LEUSCHEL99,ROYCHOUDHURI00,FIORAVANTI01,PETTOROSSI09,SEKI11}). Some of these techniques
have been developed only for safety properties, while others can be used to verify both safety and liveness properties.
Due to the use of a different programming paradigm, it is difficult to compare the relative power of these techniques to our own.
However, none of these techniques construct counterexamples when the temporal property does not hold.

Very few techniques have been developed for verifying temporal properties for functional programs other than the work of 
Lisitsa and Nemytykh \cite{LISITSA07,LISITSA08}. Their approach uses supercompilation \cite{TURCHIN86,SORENSEN96} as 
the fold/unfold transformation methodology, where our own approach uses distillation \cite{HAMILTON07A,HAMILTON12}.
Their approach can verify only safety properties, and does not construct counterexamples when the safety property does not hold.

One other area of work related to our own is the work on using Higher Order Recursion Schemes (HORS) to verify temporal
properties of functional programs. HORS are a kind of higher order tree grammar for generating a (potentially infinite) tree
and are well-suited to the purpose of verification since they have a decidable mu-calculus model checking problem, as proved
by Ong \cite{ONG06}. Kobayashi \cite{KOBAYASHI09} first showed how this approach can be used to verify safety properties 
of higher order functional programs and for the construction of counterexamples when the safety property does not hold. 
This approach was then extended to also verify liveness properties by Lester et al. \cite{LESTER10}, but counterexamples are
not constructed when the liveness property does not hold. These approaches have a very bad worst-case time complexity, but 
techniques have been developed to ameliorate this to a certain extent. It does however appear likely that this approach will be 
able to verify more properties than our own approach but much less efficiently.

\section*{Acknowledgements}
This work was supported, in part, by Science Foundation Ireland grant 10/CE/I1855 to Lero - the Irish Software Engineering Research Centre (www.lero.ie), and by the School of Computing, Dublin City University.

\bibliographystyle{eptcs}

\bibliography{mybib}

\end{document}